\long\def\symbolfootnote[#1]#2{\begingroup%
\def\thefootnote{\fnsymbol{footnote}}\footnote[#1]{#2}\endgroup}
\newcommand{\Probab}[1]{\mbox{}{\bf{Pr}}\left[#1\right]}
\newcommand{\FNorm }[1]{\mbox{}\|#1\|_{\mathrm{F}}  }
\newcommand{\FNormS}[1]{\mbox{}\|#1\|_{\mathrm{F}}^2}
\newcommand{\TNorm }[1]{\mbox{}\|#1\|_2  }
\newcommand{\TNormS}[1]{\mbox{}\|#1\|_2^2}
\newcommand{\XNorm }[1]{\mbox{}\|#1\|_{\xi}  }
\newcommand{\XNormS}[1]{\mbox{}\|#1\|_{\xi}^2}
\newtheorem{theorem}{\bf Theorem}[]
\newtheorem{lemma}[theorem]{Lemma}
\newcommand{\transp}{^{\textsc{T}}}
\newcommand{\mat}[1]{ {\ensuremath{\mathbf{#1} }}}
\def\rank{\hbox{\rm rank}}
\def\b{{\mathbf b}}
\def\e{{\mathbf e}}
\def\q{{\mathbf q}}
\def\s{{\mathbf s}}
\def\u{{\mathbf u}}
\def\v{{\mathbf v}}
\def\matA{\mat{A}}
\def\matB{\mat{B}}
\def\matC{\mat{C}}
\def\matD{\mat{D}}
\def\matE{\mat{E}}
\def\matH{\mat{H}}
\def\matI{\mat{I}}
\def\matP{\mat{P}}
\def\matQ{\mat{Q}}
\def\matS{\mat{S}}
\def\matT{\mat{T}}
\def\matU{\mat{U}}
\def\matV{\mat{V}}
\def\matW{\mat{W}}
\def\matX{\mat{X}}
\def\matY{\mat{Y}}
\def\matZ{\mat{Z}}
\def\matSig{\mat{\Sigma}}
\def\matOmega{\mat{\Omega}}
\def\scl{{\textsc{l}}}
\def\scu{{\textsc{u}}}
\DeclareMathSymbol{\Prob}{\mathbin}{AMSb}{"50}
\newcommand\remove[1]{}
\def\math#1{$#1$}
\def\mand#1{$$#1$$}
\def\frac#1#2{{#1\over #2}}
\def\mld#1{\begin{equation}
#1
\end{equation}}
\def\eqan#1{\begin{eqnarray*}
#1
\end{eqnarray*}}
\DeclareMathSymbol{\R}{\mathbin}{AMSb}{"52}
\def\choose#1#2{\left({{#1}\atop{#2}}\right)}
\def\cl#1{{\cal #1}}
\def\argmin{\mathop{\hbox{argmin}}\limits}
\def\x{{\mathbf x}}
\def\y{{\mathbf y}}
\def\z{{\mathbf z}}
\def\b{{\mathbf b}}
\def\norm#1{{\|#1\|}}
\def\r#1{{(\ref{#1})}}
\def\matPsi{\mat{\Psi}}
\def\dotfil{\leaders\hbox to 1.5mm{.}\hfill}
\newcounter{rmnum}
\def\RN#1{\setcounter{rmnum}{#1}\uppercase\expandafter{\romannumeral\value{rmnum}}}
\def\rn#1{\setcounter{rmnum}{#1}\expandafter{\romannumeral\value{rmnum}}}
\newcommand{\ignore}[1]{}
\begin{document}

\title{\bf Near-optimal Coresets For Least-Squares Regression}

\author{
Christos Boutsidis\thanks{
Mathematical Sciences Department, IBM T.J. Watson Research Center. Email: cboutsi@us.ibm.com.}
\and
Petros Drineas\thanks{
Computer Science Department,
Rensselaer Polytechnic Institute. Email:
drinep@cs.rpi.edu.}
\and
Malik Magdon-Ismail\thanks{
Computer Science Department,
Rensselaer Polytechnic Institute. Email:
magdon@cs.rpi.edu.}
}

\maketitle
\date{}

\begin{abstract}
We study (constrained) least-squares regression as well as multiple response least-squares regression and ask the question of whether a subset of the data, 
a \emph{coreset},
suffices to compute a good approximate solution to the 
regression. 
We give deterministic, low order polynomial-time algorithms to construct such coresets with approximation guarantees, together with lower bounds indicating that there is not much room for improvement upon our results.
\end{abstract}

\section{Introduction}

Linear regression is an important technique in data analysis~\cite{SLS77}. Research in the area ranges from numerical techniques~\cite{Bjo96} to robustness of the prediction error to noise (e.g., using feature selection~\cite{GE03}).
We ask whether it is
possible to efficiently identify a small subset of the data that contains 
all the essential information of a learning problem. 
Such a subset is called a ``coreset''. We show that
the answer is yes, for linear regression. Such a coreset is analogous to the support vectors in support vector machines~\cite{CS00}. Such coresets contain the meaningful or important points in the data and can be used to find good approximate solutions to the full problem by solving a (much) smaller problem. When the constraints are complex (e.g., non-convex constraints), solving a much smaller regression problem could be a significant saving~\cite{Gao07}.

We present coreset constructions for constrained regression (both simple and multiple response), as well as lower bounds for the size of coresets that achieve certain accuracy. In addition to potential computational savings, a coreset identifies the important core of a machine learning problem and is of considerable interest in applications with huge data where incremental approaches are necessary (for example chunking) and applications where the data is distributed and bandwith is costly (hence communicating only the essential data is imperative~\cite{KV13}).

Our first contribution is a deterministic, polynomial-time algorithm for constructing  a coreset for arbitrarily constrained linear regression. Let \math{k} be the ``effective dimension''  of the data (the rank of the data matrix) and let $\epsilon>0$ be the desired accuracy parameter. Our algorithm constructs a coreset of size \math{O\left(k/\epsilon^2\right)}, which achieves a \math{\left(1+\epsilon\right)}-relative error performance guarantee. In other words, solving the regression problem on the coreset results in a solution which fits all the data with an error which is at most \math{\left(1+\epsilon\right)} worse than the best possible fit to all the data. We extend our results to the setting of multiple response regression using more sophisticated techniques. Our proofs are based
on two sparsification tools from linear algebra \cite{BSS09,BDM11a}, which may be of general interest to the machine learning community, and we discuss these
in some detail.

\subsection{Problem Setup}

Assume the usual setting with \math{n} data points \math{(\z_1,y_1),\ldots,(\z_n,y_n)}; \math{\z_i \in \R^d} are feature vectors (which could have been obtained by applying a non-linear feature transform to raw data) and \math{y_i \in \R} are targets (responses). The linear regression problem asks to determine a vector \math{\x_{opt} \in\cl D\subseteq\R^d} that minimizes
\mand{
\cl E(\x)=\sum_{i=1}^n w_i (\z_i\transp \x-y_i)^2,
}
over \math{\x\in\cl D}, where \math{w_i \in \R} are positive weights. So, \math{\cl E(\x_{opt})\le \cl E(\x)}, for all \math{\x\in\cl D}.
The domain \math{\cl D} represents the constraints on the solution, e.g., in non-negative least squares (NNLS)~\cite{LH74,BMM06}, \math{\cl D=\R^{d}_+}, the nonnegative orthant. Our results
hold for arbitrary~\math{\cl D}.

A coreset of size \math{r < n} is a subset of the data points,
\math{(\z_{i_1},y_{i_1}),\ldots,(\z_{i_r},y_{i_r})}.
The coreset regression problem considers the squared error on the coreset
with a  (possibly) different set of weights \math{s_j > 0},
\mand{
\tilde{\cl E}(\x)=\sum_{j=1}^r s_j (\z_{i_j}\transp\x-y_{i_j})^2.
}
Suppose that \math{\tilde{\cl E}} is minimized at
\math{\tilde\x_{opt} \in\cl D\subseteq\R^d}, so
\math{\tilde{\cl E}(\tilde\x_{opt})\le \tilde{\cl E}(\x)}, for all
\math{\x\in\cl D}. Such a coreset  is of interest if, for some
set of weights \math{s_j}, \math{\tilde\x_{opt}} is nearly as good as \math{\x_{opt}} for the
original regression problem on all the data.
That is, for some small \math{\epsilon>0},
$$
\cl E(\x_{opt}) \leq \cl E(\tilde\x_{opt})\le (1+\epsilon) \cl E(\x_{opt}).
$$
The algorithm which constructs the coreset should also provide the
weights \math{\s_j}.
For the remainder of the paper, we switch to an equivalent
matrix formulation of the problem. (See Appendix for linear algebra background.)

\subsubsection{Matrix Formulation} Let \math{\matA\in\R^{n\times d}} be the data matrix whose rows are the weighted data points \math{\sqrt{w_i}\z_i\transp}; and let \math{\b\in\R^n} be the similarly weighted target vector, \math{b_i=\sqrt{w_i}y_i}, where for $i=1,...,n,$ $b_i$ denotes the $i$th element of $\b \in \R^n$.
The effective dimension of the data can be measured by the rank of \math{\matA}; let \math{k = \rank(\matA)}. Our results hold for arbitrary \math{n>d}, however, in most applications, $n \gg d$ and $\rank(\matA) \approx d$. We can rewrite the squared error as \math{\cl E(\x)=\TNormS{\matA\x-\b}}, so,
\begin{equation}\label{eqn:pppp1}
\x_{opt} \in \argmin_{\x\in\cl D}\TNormS{\matA\x-\b}.
\end{equation}
A coreset of size \math{r < n} is a subset \math{\matC \in \R^{r \times d}} of the rows of \math{\matA} and the corresponding elements \math{\b_c \in \R^r} of \math{\b}. Let \math{\matD\in\R^{r\times r}} be a positive diagonal matrix  for the coreset regression (the weights $s_j$ of the coreset
regression will depend on \math{\matD}).
The weighted squared error on the coreset is given by 
$$\tilde{\cl E}(\x)=\TNormS{\matD(\matC\x-\b_c)},$$ so the coreset regression seeks \math{\tilde\x_{opt}} defined by
$$
\tilde\x_{opt} \in \argmin_{\x\in\cl D}\norm{\matD\left(\matC\x-\b_c\right)}_2^2.
$$
We say that such a coreset is an  \math{(1+\epsilon)}-coreset if the solution obtained by fitting the coreset data is almost optimal for all the data. 
Formally,
\mand{
\TNormS{\matA\x_{opt}-\b}\leq\TNormS{\matA\tilde\x_{opt}-\b}\le
(1+\epsilon)\TNormS{\matA\x_{opt}-\b}.
}

\subsection{Our contributions}

In this section, we discuss our main results for various formulations of
linear regression (also summarized in Table~\ref{table:t1}).  
In the next section we present the relevant algorithms and proofs.

\remove{
\begin{table}[t]
\centering

\begin{tabular}{|l| c | c| c|}
  \hline
&&&\\[-7pt]
Type of Regression& Refer to: & Coreset Size& Approximation \\[4pt]
\hline\hline
&&&\\[-8pt]
  Constrained single-response 
&Eqn.~(\ref{eqn:pppp1}), Thm.~\ref{lem:regression} & $O\left(k/\epsilon^2\right)$ & $1+\epsilon$ \\[3pt]
Multi-objective& Eqn.~(\ref{eqn:pd1}),
 Thm.~\ref{theorem:multi} & $O\left(k/\epsilon^2\right)$ & $1+\epsilon$ \\[3pt]
Constrained multiple-response (Frobenius) & Eqns.~(\ref{eqn:pppp2}), \r{eq:arb-multiple} & $O\left(k\omega/\epsilon^2\right)$ & $1+\epsilon$ \\[3pt]
Unconstrained multiple-response (Spectral) & Eqn.~(\ref{eqn:pppp3}), Thm. \ref{lem:regressionG2a}& $O\left(\left(k+\omega\right)/\epsilon^2\right)$ & $2+\epsilon$ \\[3pt]
  Unconstrained multiple-response (Frobenius) & Eqn.~(\ref{eqn:pppp3}), Thm.
\ref{lem:regressionGF}& $O\left(k/\epsilon^2\right)$ & $2+\epsilon$ \\[3pt]
Unconstrained multiple-response ($\b$-agnostic) & Eqn. (\ref{eqn:pppp3}), Thm.
\ref{lem:regressionG2agnostic} 
& 
\math{O(n/(1+\epsilon))}
& 
\math{2+\epsilon}
\\[3pt]
    \hline
\end{tabular}

\caption{Summary of our algorithms for
coreset construction in 
linear regression. In all cases, our constructions are 
\emph{deterministic}. 
{\sc Notation:}  $n$ is the number of data points, which have dimension $d < n$; 
$k$ is
the rank of the matrix whose rows correspond to the $n$ data points;
 $\omega \geq 1$ is the number of ``response'' vectors
in multiple-response regression;
 and, $\epsilon>0$ is an accuracy parameter. 
}\label{table:t1}

\end{table}
}

\begin{table}[t]
\centering

\begin{tabular}{|l| c  l|}
  \hline
&&\\[-7pt]
Type of Regression& \multicolumn{2}{c|}{Approximation Ratio} \\[5pt]
\hline\hline
&&\\[-6pt]
  Constrained single-response 
& $1+O(\sqrt{k/r})$ &[Eqn.~(\ref{eqn:pppp1}), Thm.~\ref{lem:regression}] \\[7pt]
Multi-objective & $1+O(\sqrt{k/r})$& [Eqn.~(\ref{eqn:pd1}),
 Thm.~\ref{theorem:multi}]  \\[7pt]
Constrained multiple-response (Frobenius) & $1+O(\sqrt{k\omega/r})$ &[Eqns.~(\ref{eqn:pppp2}), \r{eq:arb-multiple}]  \\[7pt]
Unconstrained multiple-response (Spectral)  & $2+O(\sqrt{\omega/r}+\omega/r+\sqrt{k/r}) $ & [Eqn.~(\ref{eqn:pppp3}), Thm. \ref{lem:regressionG2a}]\\[7pt]
  Unconstrained multiple-response (Frobenius) & $2+O(\sqrt{k/r}) $ 
&[Eqn.~(\ref{eqn:pppp3}), Thm.
\ref{lem:regressionGF}] \\[7pt]
Unconstrained multiple-response ($\b$-agnostic) & 
$O(n/r)$ & [Eqn. (\ref{eqn:pppp3}), Thm.
\ref{lem:regressionG2agnostic}] 

\\[4pt]
    \hline
\end{tabular}

\caption{Summary of our results for
coreset construction in 
linear regression. In all cases, our algorithms are 
\emph{deterministic} and construct a coreset of size \math{r}. 
The approximation ratios are values $\beta$ such that \math{\norm{\matA\tilde\matX_{opt}-\matB}/\norm{\matA\matX_{opt}-\matB} \le \beta}.
In the first row in the table, $\matX_{opt}, \tilde\matX_{opt},$ and $\matB$ are vectors.
{\sc Notation:}  $n$ is the number of data points of dimension $d < n$; 
$k$ is
the rank of the matrix whose rows correspond to the $n$ data points;
\math{r} is the size of the coreset, \math{k<r<n};
 $\omega \geq 1$ is the number of ``response'' vectors
in multiple-response regression (in the last four rows in the table $\matX_{opt}, \tilde\matX_{opt},$ and $\matB$ have $\omega$ columns).
}\label{table:t1}

\end{table}

\subsubsection{Constrained Linear Regression (Section~\ref{sec:simple})}
Our main result for constrained simple regression is Theorem~\ref{lem:regression}, which describes a \emph{deterministic}
 polynomial time algorithm that constructs a \math{(1+\epsilon)}-coreset of size \math{O\left(k/\epsilon^2\right)}. Prior to our work, the best result achieving comparable relative error performance guarantees is Theorem~1 of~\cite{BD09} for constrained regression, and the work of~\cite{DMM06a} for unconstrained regression. Both of these prior
results construct coresets of size \math{O\left(k\log k/\epsilon^2\right)} and they are randomized, so, with some probability, the fit on all the data
can be arbitrarily bad (despite
the coreset being a logarithmic factor larger).
Our methods have comparable, low order polynomial running times and provide deterministic
guarantees. The results in  \cite{DMM06a} and \cite{BD09} were achieved using the matrix concentration results
in~\cite{RV07}. However, these concentration
bounds break unless the coreset size is
 \math{\Omega\left(k \log k/\epsilon^2\right)}.

We extend our results to multiple response regression, where the target is a matrix \math{\matB\in\R^{n\times\omega}} with $\omega \ge 1$. Each column of \math{\matB} is a seperate target (or response) that we wish to predict. We seek to minimize \math{\norm{\matA\matX-\matB}} over all \math{\matX\in\cl D\subseteq\R^{d\times\omega}}. Multiple response regression has numerous applications, but is perhaps most common in multivariate time series analysis; see for example~\cite{Ham94,BF97}. To illustrate, consider prediction of time series data: let \math{\matZ\in\R^{(n+1)\times d}} be a set of \math{d} time series, where each column is a time series with \math{n+1} time steps; we wish to predict time step \math{t+1} from time step~\math{t}. Let \math{\matA} contain the first \math{n} rows of \math{\matZ} and let \math{\matB} contain the last \math{n} rows. Then, we seek \math{\matX} that minimizes \math{\norm{\matA\matX-\matB}_{\xi}} under some
norm \math{\xi}, which is exactly the multiple response regression problem. 
In our work, we consider the
spectral ($\xi=2$) and Frobenius ($\xi=\mathrm{F}$) norms.

\subsubsection{Multi-Objective Regression (Section~\ref{subsec1})} An important variant of multiple response regression is the so-called multi-objective regression. Let 
$$\matB=[\b_1,\ldots,\b_\omega]\in\R^{n\times\omega},$$ where we explicitly identify each column in \math{\matB} as a target response \math{\b_j \in \R^n} where \math{j\in\{1,2,\ldots,\omega\}}. We seek to simultaneously fit multiple target vectors with the same \math{\x}, i.e., to simultaneously minimize \math{\norm{\matA\x-\b_j}_2^2}. This is common when the goal is to trade off different quality criteria simultaneously. Writing \math{\matX=[\x,\x,\ldots,\x]\in\R^{d\times\omega}} (\math{\omega} copies of \math{\x \in\mathcal{D} \subseteq \R^d}), we consider minimizing \math{\norm{\matA\matX-\matB}_{\mathrm{F}}}, which is equivalent to multiple regression with a
strong constraint on \math{\matX}. We present results for coreset constructions for the Frobenius-norm multi-objective regression problem in Theorem~\ref{theorem:multi}, which describes a deterministic algorithm to construct \math{(1+\epsilon)}-coresets of size $O\left(k/\epsilon^2\right)$, where $k=\rank(\matA)$. Theorem~\ref{theorem:multi} emerges by applying Theorem~\ref{lem:regression} after converting the
Frobenius-norm multi-objective regression problem to a simple response regression problem.

\subsubsection{Arbitrarily-Constrained Multiple-Response Regression (Section~\ref{sec:arb})} Using the same approach, converting the problem to a single response regression, we construct a \math{(1+\epsilon)}-coreset for Frobenius-norm arbitrarily-constrained regression in Section \ref{sec:arb}. The coreset size in this case is \math{O\left(k\omega / \epsilon^2\right)}.

\subsubsection{Unconstrained Multiple-Response Regression (Section~\ref{sec:Gen})} In Section~\ref{sec:Gen}, we consider  coresets for unconstrained multiple-response regression for both the spectral and Frobenius norms. The sizes of the coresets are smaller than the constrained case, and our main results are presented in Theorems~\ref{lem:regressionG2a} and~\ref{lem:regressionGF}. Theorem~\ref{lem:regressionG2a} presents a $(2 + \epsilon)$-coreset of size $O((k+\omega)/\epsilon^2)$ for spectral norm regression, while Theorem~\ref{lem:regressionGF} presents a $(2+\epsilon)$-coreset of size $O(k/\epsilon^2)$ for Frobenius norm regression.

\subsubsection{Lower Bounds (Section~\ref{sec:lower})} Finally, in Section~\ref{sec:lower}, we present lower bounds on coreset sizes. In the single response regression setting, we note that our algorithms need to look at the target vector \math{\b}. We show that this is unavoidable, by arguing that no \math{\b}-agnostic deterministic coreset construction algorithm can construct coresets which are small (Theorem~\ref{lem:agnosticD}). We also present similar results for \math{\b}-agnostic randomized coreset constructions (Theorem~\ref{lem:agnosticR}). 

Then, we present lower bounds on the size of coresets for spectral and Frobenius norm multiple response regression that
apply in the general, non \math{\b}-agnostic, setting
(Theorems~\ref{lem:lowerS} and~\ref{lem:lowerF}).

\section{Constrained Linear Regression} \label{sec:simple}

We define constrained linear regression as follows: given $\matA\in\R^{n\times d}$ of rank  $k$, $\b\in\R^n$, and $\mathcal{D} \subseteq \R^d$, we seek \math{\x_{opt}\in\cl D} for which $ \TNormS{\matA \x_{opt} - \b}\le \TNormS{\matA \x - \b}$, for all $\x\in\cl D$ (the domain $\mathcal{D}$ represents the constraints on \math{\x} and can be arbitrary).  To construct a coreset $\matC \in \R^{r \times d}$ (i.e., $\matC$ consists of $r$ rows of $\matA$) and $\b_c \in \R^r$ (i.e., $\b_c$ consists of $r$ elements of $\b$), we introduce \emph{sampling} and \emph{rescaling} matrices $\matS$ and $\matD$ respectively. More specifically, we define the \emph{row-sampling matrix} \math{\matS\in\R^{r\times n}} whose rows are basis vectors \math{\e_{i_1}\transp,\ldots,\e_{i_r}\transp}. Our coreset $\matC$ is now equal to $\matC = \matS\matA$; clearly, \math{\matC} is a matrix whose rows are the rows of \math{\matA} corresponding to indices \math{i_{1},\ldots,i_r}. Similarly, \math{\b_c=\matS\b} contains the corresponding elements of the target vector. Next, let \math{\matD\in\R^{r\times r}} be a positive diagonal rescaling matrix and define the \math{\matD}-weighted regression problem on the coreset as follows:
\begin{equation}\label{eqn:pp1}
\tilde\x_{opt} \in \argmin_{\x\in\cl D}\norm{\matD\left(\matC\x-\b_c\right)}_2^2 = \argmin_{\x\in\cl D}\norm{\matD\matS\left(\matA\x-\b\right)}_2^2.
\end{equation}
In the above, the operator \math{\matD\matS} first samples and then rescales rows of $\matA$ and $\b$. Theorem \ref{lem:regression} is the main result in this section and presents a deterministic algorithm to select a coreset by constructing $\matD$ and $\matS$.
\begin{algorithm}[h]
\begin{framed}
\textbf{Input:} $\matA\in\R^{n\times d}$ of rank $k$, $\b \in \R^n$, and \math{r>k+1}. \\
\noindent \textbf{Output:}
sampling matrix \math{\matS \in \R^{r \times n}} and rescaling matrix \math{\matD \in \R^{r \times r}}.

\begin{algorithmic}[1]
\STATE Compute the SVD of \math{\matY=[\matA,\b]}. Let \math{\matY=\matU\matSig\matV\transp}, where
\math{\matU\in\R^{n\times\ell}}, \math{\matSig\in\R^{\ell\times\ell}}
and
\math{\matV\in\R^{(d+1)\times\ell}}, with \math{\ell\le k +1} (the rank of $\matY$).
\STATE
{\bf Return} \math{[\matD,\matS]=SimpleSampling(\matU,r)}
(see Lemma~\ref{lem:1set})
\end{algorithmic}
\caption{Deterministic coreset construction for
constrained linear regression.\label{alg:SimpReg}}
\end{framed}
\end{algorithm}
\begin{theorem}
\label{lem:regression}
Given $\matA\in\R^{n\times d}$ of rank $k$, $\b\in\R^n$, and $\mathcal{D} \subseteq \R^d$, Algorithm~\ref{alg:SimpReg} constructs matrices \math{\matS\in\R^{r\times n}} and \math{\matD\in\R^{r\times r}} (for any $r > k+1$) such that
 $\tilde{\x}_{opt}$ of Eqn.~(\ref{eqn:pp1}) satisfies
$$
\frac{\TNormS{\matA \tilde{\x}_{opt} - \b}}{\TNormS{\matA \x_{opt} - \b}}
 \leq
\frac{ r+k+1+2\sqrt{r(k+1)}}{r+k+1-2\sqrt{r(k+1)}}
=
1+4\sqrt{\frac{k}{r}}+o\left({\sqrt{{k}/{r}}}\right).
$$
The running time of the proposed algorithm is $T\left(\matU_{\left[\matA,\b\right]}\right)+O\left(r n k^2\right)$, where $T\left(\matU_{\left[\matA,\b\right]}\right)$ is the time needed to compute the left singular vectors of the matrix \math{\left[\matA,\b\right] \in \R^{n \times (d+1)}}.
\end{theorem}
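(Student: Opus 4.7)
The plan is to reduce the problem to the spectral approximation of a tall thin matrix of orthonormal columns and then invoke the deterministic BSS-style sparsification tool cited in the introduction as \cite{BSS09}. The key observation is that the residual $\matA\x - \b$ always lies in a $(k+1)$-dimensional subspace, so preserving norms on that subspace is enough.

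Concretely, first I would form the augmented matrix $[\matA,\b]\in\R^{n\times(d+1)}$, whose rank is at most $k+1$, and compute an $n\times(k+1)$ matrix $\matU$ whose columns form an orthonormal basis for the column space of $[\matA,\b]$ (this is what the cost $T(\matU_{[\matA,\b]})$ accounts for). For every $\x\in\cl D$, the vector $\matA\x-\b$ is a linear combination of the columns of $[\matA,\b]$ (via coefficients $[\x\transp,-1]\transp$), hence it lies in the column space of $\matU$. Consequently there exists $\y(\x)\in\R^{k+1}$ with $\matA\x-\b=\matU\y(\x)$ and $\TNorm{\matA\x-\b}=\TNorm{\y(\x)}$.

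Next I would feed $\matU$ into the deterministic sparsification algorithm of \cite{BSS09}: for any $r>k+1$, it returns a sampling matrix $\matS\in\R^{r\times n}$ and a positive diagonal $\matD\in\R^{r\times r}$ such that every singular value of $\matD\matS\matU$ lies in $[1-\sqrt{(k+1)/r},\,1+\sqrt{(k+1)/r}]$. The running-time overhead of this procedure on the $n\times(k+1)$ matrix $\matU$ is $O(rnk^2)$, which matches the theorem. Combining this spectral approximation with the identity $\matA\x-\b=\matU\y(\x)$ gives the two-sided bound
\[
\left(1-\sqrt{\tfrac{k+1}{r}}\right)\TNorm{\matA\x-\b}
\;\le\;\TNorm{\matD\matS(\matA\x-\b)}\;\le\;
\left(1+\sqrt{\tfrac{k+1}{r}}\right)\TNorm{\matA\x-\b}
\]
simultaneously for every $\x\in\cl D$, since $\y(\x)$ ranges over all of $\R^{k+1}$ only incidentally and the bound holds for every vector in the column space of $\matU$.

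Finally I would chain the two-sided bound across $\tilde\x_{opt}$ and $\x_{opt}$, exploiting that $\tilde\x_{opt}$ minimizes $\TNorm{\matD\matS(\matA\x-\b)}^2$ over $\cl D$, so $\TNorm{\matD\matS(\matA\tilde\x_{opt}-\b)}\le \TNorm{\matD\matS(\matA\x_{opt}-\b)}$. This gives
\[
\left(1-\sqrt{\tfrac{k+1}{r}}\right)^2\TNormS{\matA\tilde\x_{opt}-\b}
\;\le\;
\left(1+\sqrt{\tfrac{k+1}{r}}\right)^2\TNormS{\matA\x_{opt}-\b},
\]
which rearranges to the stated ratio $(r+k+1+2\sqrt{r(k+1)})/(r+k+1-2\sqrt{r(k+1)})$, and a first-order Taylor expansion in $\sqrt{(k+1)/r}$ yields the claimed asymptotic $1+4\sqrt{k/r}+o(\sqrt{k/r})$. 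The main obstacle I anticipate is purely bookkeeping: making sure the guarantee from \cite{BSS09} is applied with the correct dimension $k+1$ (not $k$) so that the full residual $\matA\x-\b$, and not only the fitted part $\matA\x$, is captured by the preserved subspace; the rest follows from elementary Pythagorean/optimality manipulations and does not require any new probabilistic or analytic machinery.
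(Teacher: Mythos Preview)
Your proposal is correct and follows essentially the same approach as the paper: form the augmented matrix $[\matA,\b]$, take an orthonormal basis $\matU$ for its column space (rank $\le k+1$), apply the BSS sparsification (Lemma~\ref{lem:1set}) to $\matU$, and then chain the resulting two-sided norm bound through the optimality of $\tilde\x_{opt}$. The paper's proof is exactly this, just written with explicit vectors $\y_1=\Sigma\matV\transp[\x_{opt};-1]$ and $\y_2=\Sigma\matV\transp[\tilde\x_{opt};-1]$ in place of your abstract $\y(\x)$.
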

For any \math{0<\epsilon<1}, we can set \math{r=k/\epsilon^2} to get an approximation ratio roughly equal to \math{1+4\epsilon}. This result considerably improves the result in~\cite{BD09}, which needs \math{r=O(k\log k/\epsilon^2)}
to achieve the same approximation ratio. Additionally, our bound is deterministic, whereas the bound in~\cite{BD09} fails with constant probability.
\cite{BD09} also requires an SVD computation in the first step, so its running time is comparable to ours.

In order to prove the above theorem, we need a linear algebraic sparsification result from~\cite{BSS09}, specifically Theorem 3.1 in~\cite{BSS09},
which we restate using our notation (we present the corresponding algorithm below).
\begin{lemma} [Single-set Spectral Sparsification~\cite{BSS09}]
\label{lem:1set}
Given $\matU \in\R^{n\times \ell}$ satisfying \math{\matU\transp\matU=\matI_{\ell}} and \math{r>\ell}, we can deterministically construct sampling and rescaling matrices \math{\matS \in \R^{r \times n}} and \math{\matD \in \R^{r \times r}} such that, for all $\y\in\R^\ell$:
$$ \left(1-\sqrt{{\ell}/{r}}\right)^2 \TNormS{\matU \y}
\leq \TNormS{\matD\matS \matU \y}
\leq \left(1+\sqrt{{\ell}/{r}}\right)^2
\TNormS{\matU \y}.$$
The algorithm runs in $O(r n \ell^2)$ time and we denote it as $[\matD, \matS] = SimpleSampling(\matU, r)$.
\end{lemma}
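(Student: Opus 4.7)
The plan is to follow the barrier-function method of Batson--Spielman--Srivastava \citep{BSS09}. Writing the rows of $\matU$ as $u_1^{\transp},\ldots,u_n^{\transp}$, the hypothesis $\matU^{\transp}\matU=\matI_\ell$ reads $\sum_{i=1}^n u_i u_i^{\transp}=\matI_\ell$. Producing $(\matS,\matD)$ is equivalent to choosing an index multiset $i_1,\ldots,i_r\in\{1,\ldots,n\}$ together with positive weights $t_1,\ldots,t_r$ so that the matrix
$$
\mat{M} \;:=\; \sum_{j=1}^r t_j\, u_{i_j} u_{i_j}^{\transp} \;=\; \matU^{\transp}\matS^{\transp}\matD^2\matS\matU
$$
obeys the operator inequality $(1-\sqrt{\ell/r})^2\matI_\ell \preceq \mat{M} \preceq (1+\sqrt{\ell/r})^2\matI_\ell$. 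Once this is established, evaluating the associated quadratic form at any $\y\in\R^\ell$ and using $\TNormS{\matU\y}=\y^{\transp}\y$ yields the stated two-sided bound.

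The matrix $\mat{M}$ will be built greedily, one rank-one term at a time, starting from $\mat{M}_0=\bm{0}$. Throughout the construction two moving barriers $l_j<\lambda_{\min}(\mat{M}_j)\le\lambda_{\max}(\mat{M}_j)<u_j$ are maintained together with the two BSS potentials
$$
\Phi^u(\mat{M},u)=\trace\!\bigl((u\matI_\ell-\mat{M})^{-1}\bigr),\qquad \Phi^l(\mat{M},l)=\trace\!\bigl((\mat{M}-l\matI_\ell)^{-1}\bigr),
$$
which blow up as any eigenvalue of $\mat{M}$ approaches either barrier. The initial values $u_0>0$ and $l_0<0$ will be chosen so that both potentials start at an explicit constant, and per-step barrier shifts $\delta_U,\delta_L>0$ will be chosen so that, after $r$ iterations, the quotient $u_r/l_r=(u_0+r\delta_U)/(l_0+r\delta_L)$ equals the target condition number $(1+\sqrt{\ell/r})^2/(1-\sqrt{\ell/r})^2$. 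A single uniform rescaling, absorbed into the diagonal of $\matD$, then centers $\mat{M}$ inside the advertised symmetric window.

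The heart of the argument is the inductive step: given $\mat{M}_{j-1}$ with both potentials bounded by their initial values, one exhibits an index $i$ and a weight $t>0$ for which $\mat{M}_j=\mat{M}_{j-1}+t\,u_i u_i^{\transp}$ satisfies $\Phi^u(\mat{M}_j,u_{j-1}+\delta_U)\le\Phi^u(\mat{M}_0,u_0)$ and $\Phi^l(\mat{M}_j,l_{j-1}+\delta_L)\le\Phi^l(\mat{M}_0,l_0)$. Expanding each potential after a rank-one update via the Sherman--Morrison identity yields closed-form expressions in terms of the quadratic forms $u_i^{\transp}(u\matI-\mat{M})^{-1}u_i$ and $u_i^{\transp}(u\matI-\mat{M})^{-2}u_i$ (and their lower-barrier analogues). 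Averaging these expressions over $i$ collapses, via $\sum_i u_i u_i^{\transp}=\matI_\ell$, into traces of $(u\matI-\mat{M})^{-1}$ and $(u\matI-\mat{M})^{-2}$; a careful choice of the step sizes $\delta_U,\delta_L$ then forces the sum of the ``upper cost'' and ``lower cost'' attached to any admissible weight $t$ to be strictly negative in average, guaranteeing that some concrete $(i,t)$ is admissible.

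The main obstacle is precisely this existence step and the simultaneous calibration of $(u_0,l_0,\delta_U,\delta_L)$ so that (i) the per-iteration averaging inequality continues to hold at every one of the $r$ steps, (ii) the total barrier displacement delivers the advertised ratio, and (iii) neither potential overshoots its initial bound. The running time is dominated by the per-iteration work of maintaining $(u\matI-\mat{M})^{-1}$ and $(\mat{M}-l\matI)^{-1}$ under a rank-one Sherman--Morrison update ($O(\ell^2)$) and scoring all $n$ candidate vectors ($O(n\ell^2)$), giving the claimed $O(rn\ell^2)$ total.
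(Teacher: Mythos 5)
The paper offers no proof of this lemma: it imports it verbatim from \citep{BSS09}, so the comparison here is between your sketch and the Batson--Spielman--Srivastava argument itself. Your outline reproduces that argument's architecture faithfully and correctly: the reduction of the two-sided quadratic-form bound to the operator inequality $(1-\sqrt{\ell/r})^2\matI_\ell \preceq \mat{M} \preceq (1+\sqrt{\ell/r})^2\matI_\ell$ for $\mat{M}=\matU\transp\matS\transp\matD^2\matS\matU$ (valid because $\TNormS{\matU\y}=\TNormS{\y}$ and $\TNormS{\matD\matS\matU\y}=\y\transp\mat{M}\y$), the greedy rank-one construction with two moving barriers and their trace potentials, the Sherman--Morrison expansion, the averaging over $i$ using $\sum_i u_iu_i\transp=\matI_\ell$, and the $O(rn\ell^2)$ cost accounting. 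This is the right, and essentially the only known, deterministic route to the stated bound.

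As a proof, however, it has a genuine gap, and you have named it yourself: the existence, at every one of the $r$ iterations, of an index $i$ and weight $t>0$ keeping both potentials below their initial values is asserted but never established, and that step \emph{is} the theorem. To close it one must (i) derive from Sherman--Morrison the two explicit thresholds $U(u_i)$ and $L(u_i)$ on the reciprocal weight $1/t$ at which the upper and lower potentials stop increasing, (ii) prove the two averaging inequalities $\sum_i U(u_i)\le 1/\delta_U+\epsilon_U$ and $\sum_i L(u_i)\ge 1/\delta_L-\epsilon_L$, and (iii) exhibit concrete parameters for which $1/\delta_U+\epsilon_U\le 1/\delta_L-\epsilon_L$ (in \citep{BSS09} both sides equal $1-\sqrt{\ell/r}$ with $\delta_L=1$, $\epsilon_L=\sqrt{\ell/r}$, and $\delta_U,\epsilon_U$ tuned so that the final ratio of barriers is exactly $\bigl(\frac{1+\sqrt{\ell/r}}{1-\sqrt{\ell/r}}\bigr)^2$). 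Only then does ``some $(i,t)$ is admissible'' follow, by comparing the two sums termwise. Your phrasing of the averaging step (``the sum of the upper and lower costs is strictly negative in average'') is not quite the right inequality --- the argument compares two sums of admissible-weight thresholds rather than showing a signed sum is negative --- which is a symptom of the calibration in (iii) not having been carried out. So: correct strategy, correct skeleton, but the decisive lemma and its constants are missing; as written this is a plan for the BSS proof rather than a proof.
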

\begin{proof} (of Theorem \ref{lem:regression})
Let $\matY = \left[\matA,\b\right]\in\R^{n\times(d+1)}$ and compute its SVD: \math{\matY=\matU\matSig\matV\transp}. Let $\ell$ be the rank of $\matY$ ($\ell \leq k +1$, since $\rank(\matA)=k$) and note that \math{\matU\in\R^{n\times\ell}}, \math{\matSig\in\R^{\ell\times\ell}}, and \math{\matV\in\R^{(d+1)\times\ell}}. Let $[\matD, \matS] = SimpleSampling(\matU, r)$ and define \math{\y_1, \y_2\in\R^{\ell}} as follows:
\mand{
\y_1=\matSig\matV\transp
\left[
\begin{matrix}
\x_{opt}\\
-1
\end{matrix}
\right],
\qquad\hbox{and}\qquad
\y_2 = \matSig\matV\transp
\left[
\begin{matrix}
\tilde\x_{opt}\\
 -1
\end{matrix}
\right].
}
Note that \math{\matU\y_1=\matA\x_{opt}-\b}, \math{\matU\y_2=\matA\tilde\x_{opt}-\b}, \math{\matD\matS\matU\y_1=\matD\matS\left(\matA\x_{opt}-\b\right)}, and
\math{\matD\matS\matU\y_2=\matD\matS\left(\matA\tilde\x_{opt}-\b\right)}. We will bound \math{\TNorm{\matU\y_2}} in terms of \math{\TNorm{\matU\y_1}}:
\mand{
 \left(1-\sqrt{{\ell}/{r}}\right)^2
\TNormS{\matU \y_2}
\mathop{\buildrel{(a)}\over{\leq}}
\TNormS{\matD\matS\matU\y_2}
\mathop{\buildrel{(b)}\over{\leq}}
\TNormS{\matD\matS\matU\y_1}
\mathop{\buildrel{(c)}\over{\leq}}
 \left(1+\sqrt{{\ell}/{r}}\right)^2
\TNormS{\matU \y_1}.
}
(a) and (c) follow from Lemma \ref{lem:1set}; (b) follows from the optimality of \math{\tilde\x_{opt}} for the coreset regression in Eqn.~(\ref{eqn:pp1}). Using \math{\ell\le k+1} and manipulating the above expression concludes the proof of the theorem. The running time of the algorithm is equal to the time needed to compute \math{\matU} and the time needed to run the algorithm of Lemma~\ref{lem:1set} with \math{\ell\le k+1}.
\end{proof}
\begin{algorithm}[t] \label{alg:sample2}
\begin{framed}
   \caption{{\sf SimpleSampling} (Lemma~\ref{lem:1set})}
   {\bfseries Input:} $\matU = [\u_1, \u_2, \ldots,\u_n]\transp \in \R^{n \times \ell}$ with $\u_i \in \R^\ell$ and \math{r > \ell}. \\
    {\bfseries Output:} Sampling matrix $\matS \in \R^{r \times n}$ and rescaling matrix $\matD \in \R^{r \times r}$.
 \begin{algorithmic}[1]
    \STATE Initialize  $\matA_0 = \bm{0}_{\ell \times \ell}$, $\matS = \bm{0}_{r \times n}$, and $\matD=\bm{0}_{r \times r}$.
    \STATE Set constants $\delta_L=1$ and $\delta_U=\left( 1 + \ell/r \right) \left(1-\sqrt{\ell/r}\right)^{-1}$.
   \FOR{$\tau=0$ {\bfseries to} $r-1$}
   \STATE Let \math{\scl_\tau=\tau-\sqrt{r \ell}}; 
   \math{\scu_{\tau} = \delta_{U} \left( \tau + \sqrt{\ell r} \right) }.
   \STATE Pick index $i_{\tau}\in\{1,2,...,n\}$ and number $t_{\tau}>0$ (see Section~\ref{sec:ssss} for the definition of $U, L$):
    $$\hspace*{-0.2in} U(\u_{i_{\tau}},\delta_U,\matA_{\tau},\scu_\tau) \le\frac{1}{t_{\tau}}\le
       L(\u_{i_{\tau}}, \delta_L, \matA_{\tau},\scl_\tau).$$
   \STATE Update $\matA_{\tau+1} = \matA_{\tau} + t_{\tau} \u_{i_{\tau}} \u_{i_{\tau}}\transp$;
               and
   set $\matS_{\tau+1, {i_{\tau}}} = 1$, $\matD_{\tau+1,\tau+1} = 1/\sqrt{t_{\tau}}$.
   \ENDFOR
   \STATE Multiply all the weights in $\matD$ by $ \sqrt{ r^{-1} \left(1-\sqrt{\ell/r}\right) }.$
   \STATE {\bfseries Return:} $\matS$ and $\matD$.
\end{algorithmic}
\end{framed}
\end{algorithm}

\subsection{Single-set Spectral Sparsification Algorithm (Lemma~\ref{lem:1set})}\label{sec:ssss}

We now discuss in more detail the sparsification algorithm of Lemma~\ref{lem:1set}. We present the corresponding algorithm as Algorithm~\ref{alg:sample2}.
Our notation deviates from the original in~\cite{BSS09}; we employ our own presentation of the corresponding algorithm in~\cite{BDM11a}.
Algorithm \ref{alg:sample2} is a greedy technique that selects columns one at a time.
To describe the algorithm in more detail, it is convenient to view the input matrix as a set of \math{n} column vectors,
$$\matU\transp=[\u_1, \u_2, \ldots,\u_n],$$ with $\u_i \in \R^{\ell}$ ($i=1,...,n$). Given  $\ell$ and $r>\ell$, introduce the
iterator $\tau = 0, 1,2,...,r-1,$ and define the
parameter $\scl_\tau=\tau-\sqrt{r \ell}$.
For a square symmetric
matrix \math{\matA\in\R^{\ell \times \ell}} with eigenvalues \math{\lambda_1,\ldots,
\lambda_{\ell}}, vector $\u \in \R^{\ell}$ and scalar $\scl\in\R$, define
$$
\phi(\scl, \matA) =
\sum_{i=1}^{\ell}\frac{1}{\lambda_i-\scl},
$$
and let
$L(\u, \delta_L, \matA, \scl)$  be defined as
$$ L(\u, \delta_L, \matA, \scl)  =
\frac{\u\transp(\matA-\scl'\matI_{\ell})^{-2}\u}
{\phi(\scl', \matA)-\phi(\scl,\matA)} -\u\transp(\matA-\scl'\matI_k)^{-1}\u,$$
where $$\scl'= \scl+\delta_L = \scl+1.$$
Similarly, for a square symmetric matrix \math{\matA\in\R^{\ell \times \ell}} with eigenvalues \math{\lambda_1,\ldots,
\lambda_{\ell}}, $\u \in \R^{\ell}$, $\scu\in\R$, define:
$$
\hat{\phi}(\scu,\matA) =
\sum_{i=1}^{\ell}\frac{1}{\scu - \lambda_i},
$$
and let
$U(\u, \delta_{U}, \matA, \scu)$  be defined as
$$ U(\u, \delta_{U}, \matA, \scu)  =
\frac{\u\transp(\matA-\scu'\matI_{\ell})^{-2}\u}
{\hat\phi(\scu,\matA)-\hat{\phi}(\scu', \matA) }
-
\u\transp(\matA-\scu'\matI_{\ell})^{-1}\u,$$
where
$$ \scu' = \scu + \delta_{U} =
\scu + \left( 1 + \ell/r \right) \left(1-\sqrt{\ell/r}\right)^{-1}.$$
The running time of the algorithm
is dominated by the search for an index $i_{\tau}$
satisfying $$\hspace*{-0.2in} U(\u_{i_{\tau}},\delta_U,\matA_{\tau},\scu_\tau) \le\frac{1}{t_{\tau}}\le
       L(\u_{i_{\tau}}, \delta_L, \matA_{\tau},\scl_\tau)$$ (one can achieve that by exhaustive search). One needs
\math{\phi(\scl,\matA)} and \math{\hat\phi(\scl,\matA)}, and hence the eigenvalues of
\math{\matA}. This takes
\math{O(\ell^3)} time, once per iteration, for a total of
\math{O(r\ell^3)}. Then, for \math{i=1,\ldots,n}, we need to compute the functions
\math{L} and \math{U} for every \math{\u_i}. This takes \math{O(n\ell^2)} per iteration,
for a total of \math{O(rn\ell^2)}. So, the total running time of the algorithm is \math{O(nr\ell^2)}.

\section{Constrained Multiple-Response Regression}\label{section:SimpToGen}

Constrained multiple-response regression in the Frobenius norm can be reduced to simple regression. So, we can apply the results of the previous section
to this setting.

\subsection{Multi-Objective Regression}\label{subsec1}
Let $\matA \in \R^{n \times d}$ and $\matB \in \R^{n \times \omega}$, with $\omega \ge 1$. The objective of multi-objective regression is:
\begin{equation}\label{eqn:pd1}
\min_{\x\in\cl D} \norm{\matA[\x,\ldots,\x]-\matB}_{\mathrm{F}}^2,
\end{equation}
where $[\x,\ldots,\x] \in \R^{d \times \omega}$ contains $\omega$ copies of $\x \in \mathcal{D} \subseteq \R^d$. Let \math{\b_{avg}=\frac{1}{\omega}\matB\bm1_\omega} (here $\bm1_\omega \in \R^{\omega}$ is a vector of all ones and thus $\b_{avg} \in \R^n$ is the average of the columns in \math{\matB}).  Recall that \math{\matA\in\R^{n\times d}}, \math{\matB\in\R^{n\times\omega}}, and let \math{\matX=[\x,\ldots,\x]\in\R^{d\times\omega}}.
\begin{lemma}
\label{lemma:multi2simple1} For
\math{\matX=[\x,\ldots,\x]\in\R^{d\times\omega}},
\math{\displaystyle \norm{\matA\matX-\matB}_\mathrm{F}^2=\omega\norm{\matA\x-\b_{avg}}_2^2
+\sum_{i=1}^\omega\norm{\b_{avg}-\matB^{(i)}}_2^2}.
\end{lemma}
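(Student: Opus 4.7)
The plan is to use the standard parallel-axis / bias-variance decomposition for sums of squared Euclidean distances, applied column-by-column to the Frobenius norm.

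First I would unfold the Frobenius norm as a sum over its columns. Since every column of $\matA\matX$ equals $\matA\x$ (because $\matX = [\x,\ldots,\x]$), writing $\matB^{(i)}$ for the $i$-th column of $\matB$ gives
$$\|\matA\matX - \matB\|_F^2 \;=\; \sum_{i=1}^{\omega} \|\matA\x - \matB^{(i)}\|_2^2.$$

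Next I would insert $\b_{avg}$ inside each term via $\matA\x - \matB^{(i)} = (\matA\x - \b_{avg}) + (\b_{avg} - \matB^{(i)})$ and expand the square. This yields three pieces: the diagonal term $\omega\|\matA\x - \b_{avg}\|_2^2$, the cross term $2(\matA\x - \b_{avg})^{\!\top}\!\sum_{i=1}^\omega(\b_{avg} - \matB^{(i)})$, and the residual $\sum_{i=1}^\omega \|\b_{avg} - \matB^{(i)}\|_2^2$. The only step that requires anything at all is observing that the cross term vanishes: by definition $\b_{avg} = \tfrac{1}{\omega}\matB\bm{1}_\omega$, hence $\sum_{i=1}^\omega \matB^{(i)} = \omega\, \b_{avg}$ and so $\sum_{i=1}^\omega(\b_{avg} - \matB^{(i)}) = \bm{0}$. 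Combining the surviving two pieces yields the claimed identity.

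There really is no obstacle here: the lemma is a direct vector-valued analogue of the elementary identity $\sum_i (v - u_i)^2 = \omega(v - \bar u)^2 + \sum_i(\bar u - u_i)^2$, and the only reason it is recorded as a lemma is to streamline the reduction of Frobenius-norm multi-objective regression to a single-response regression problem (with data matrix $\matA$ and target $\b_{avg}$), after which Theorem~\ref{lem:regression} can be invoked directly to obtain Theorem~\ref{theorem:multi}. The additive term $\sum_i \|\b_{avg} - \matB^{(i)}\|_2^2$ does not depend on $\x$, so minimizing over $\x \in \mathcal{D}$ on the left-hand side is equivalent (up to the factor $\omega$) to minimizing $\|\matA\x - \b_{avg}\|_2^2$, which is exactly what makes this decomposition useful downstream.
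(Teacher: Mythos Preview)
Your proof is correct. The paper actually states this lemma without proof (treating it as elementary), and your column-by-column expansion with the bias--variance / parallel-axis cancellation of the cross term is exactly the standard argument one would supply here.
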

In the above, $\matB^{(i)} \in \R^n$ denotes the $i$-th column of $\matB$ as a column vector. Note that the second term in Lemma \ref{lemma:multi2simple1} does not depend on \math{\x} and thus the generalized multi-objective regression can be reduced to simple regression on \math{\matA} and \math{\b_{avg}}. Using Theorem~\ref{lem:regression}, we can get a coreset: let \math{\tilde\x_{opt}} minimize
\math{\TNorm{\matD\matS\left(\matA\x-\b_{avg}\right)}}, where \math{\matS} and \math{\matD} are obtained via Theorem~\ref{lem:regression} applied to \math{\matA} and $\b_{avg}$.
If \math{\tilde\matX_{opt}=\left[\tilde\x_{opt},\ldots,\tilde\x_{opt}\right]}, then, by
Lemma~\ref{lemma:multi2simple1}, \math{\tilde\matX_{opt}} minimizes \math{\norm{\matD\matS\left(\matA\matX-\matB\right)}_{\mathrm{F}}}. Similarly, if \math{\x_{opt}} minimizes
\math{\TNorm{\matA\x-\b_{avg}}} and \math{\matX_{opt}=\left[\x_{opt},\ldots,\x_{opt}\right]}, then \math{\matX_{opt}} minimizes \math{\norm{\matA\matX-\matB}_{\mathrm{F}}}.
Theorem \ref{theorem:multi} states that \math{\tilde\matX_{opt}} approximates \math{\matX_{opt}}.

\begin{theorem}\label{theorem:multi}
Given $\matA\in\R^{n\times d}$ of rank $k$ and $\matB\in\R^{n \times \omega}$, we can construct matrices \math{\matS\in\R^{r\times n}} and \math{\matD\in\R^{r\times r}} (for any $r > k+1$) such that the matrix $\tilde\matX_{opt}=\left[\tilde\x_{opt},\ldots,\tilde\x_{opt}\right]$ that minimizes \math{\norm{\matD\matS\left(\matA\matX-\matB\right)}_\mathrm{F}} over all matrices $\matX=\left[\x, \x,\ldots,\x\right]$ with $\x \in \mathcal{D} \subseteq \R^d$ satisfies:
\mand{
\norm{\matA \tilde{\matX}_{opt} - \matB}_\mathrm{F}^2
\le
\left(1+O\left(\sqrt{{k}/{r}}\right)\right)\norm{\matA \matX_{opt} -
\matB}_\mathrm{F}^2.}
The run time of the proposed algorithm is $T\left(\matU_{\left[\matA,\b_{avg}\right]}\right)+O\left(n\omega+ r n k^2\right)$, where $T\left(\matU_{\left[\matA,\b_{avg}\right]}\right)$ is the time needed to compute the left singular vectors of the matrix \math{\left[\matA,\b_{avg}\right] \in \R^{n \times (d+1)}}.
\end{theorem}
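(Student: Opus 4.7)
The plan is to exploit Lemma~\ref{lemma:multi2simple1} to collapse the multi-objective problem onto simple single-response regression, apply Theorem~\ref{lem:regression} to that single-response instance, and then re-expand using Lemma~\ref{lemma:multi2simple1} a second time, this time on the coreset. The crucial observation is that Lemma~\ref{lemma:multi2simple1} is a purely algebraic identity, so it applies verbatim to $\matD\matS\matA$ and $\matD\matS\matB$ as well as to $\matA$ and $\matB$; moreover, averaging commutes with the linear map $\matD\matS$, i.e.\ $(\matD\matS\matB)\cdot\tfrac{1}{\omega}\bm1_\omega=\matD\matS\b_{avg}$, so the "average target" of the sampled problem is exactly $\matD\matS\b_{avg}$.

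First I would construct $\matS,\matD$ by invoking Theorem~\ref{lem:regression} on $\matA\in\R^{n\times d}$ and $\b_{avg}\in\R^n$ with the given $r>k+1$, obtaining $\tilde\x_{opt}\in\argmin_{\x\in\cl D}\TNormS{\matD\matS(\matA\x-\b_{avg})}$ that satisfies $\TNormS{\matA\tilde\x_{opt}-\b_{avg}}\le(1+O(\sqrt{k/r}))\TNormS{\matA\x_{opt}-\b_{avg}}$, where $\x_{opt}$ is the corresponding optimum on the full data. Setting $\tilde\matX_{opt}=[\tilde\x_{opt},\dots,\tilde\x_{opt}]$ and $\matX_{opt}=[\x_{opt},\dots,\x_{opt}]$, Lemma~\ref{lemma:multi2simple1} applied to $\matD\matS\matA$ and $\matD\matS\matB$ shows that $\tilde\matX_{opt}$ is in fact the minimizer of $\norm{\matD\matS(\matA\matX-\matB)}_{\mathrm F}$ over all matrices of the form $[\x,\dots,\x]$, as claimed in the theorem statement.

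The core estimate then follows by two applications of Lemma~\ref{lemma:multi2simple1} on the original data. Writing $C:=\sum_{i=1}^\omega\TNormS{\b_{avg}-\matB^{(i)}}\ge 0$, Lemma~\ref{lemma:multi2simple1} gives
$$
\norm{\matA\tilde\matX_{opt}-\matB}_{\mathrm F}^2
=\omega\,\TNormS{\matA\tilde\x_{opt}-\b_{avg}}+C
\le \omega\bigl(1+O(\sqrt{k/r})\bigr)\TNormS{\matA\x_{opt}-\b_{avg}}+C,
$$
and since $C$ is nonnegative we can absorb it into the multiplicative factor to obtain
$$
\norm{\matA\tilde\matX_{opt}-\matB}_{\mathrm F}^2
\le\bigl(1+O(\sqrt{k/r})\bigr)\bigl(\omega\,\TNormS{\matA\x_{opt}-\b_{avg}}+C\bigr)
=\bigl(1+O(\sqrt{k/r})\bigr)\norm{\matA\matX_{opt}-\matB}_{\mathrm F}^2,
$$
using Lemma~\ref{lemma:multi2simple1} once more in the last equality.

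I do not expect a real obstacle here; the only mildly delicate point is confirming that $\tilde\matX_{opt}$ constructed through the reduction actually coincides with the minimizer promised in the theorem statement, which is precisely what the commutation $(\matD\matS\matB)\bm1_\omega/\omega=\matD\matS\b_{avg}$ ensures. The running time decomposition is immediate: compute $\b_{avg}$ in $O(n\omega)$, then invoke the algorithm of Theorem~\ref{lem:regression} on $[\matA,\b_{avg}]$, adding the cost $T(\matU_{[\matA,\b_{avg}]})+O(rnk^2)$.
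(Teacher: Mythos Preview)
Your proposal is correct and follows essentially the same route as the paper: reduce to simple regression on $(\matA,\b_{avg})$ via Lemma~\ref{lemma:multi2simple1}, apply Theorem~\ref{lem:regression}, and then use Lemma~\ref{lemma:multi2simple1} again to lift the bound back to the Frobenius norm, absorbing the nonnegative constant $C$ into the multiplicative factor. The only addition relative to the paper's proof is your explicit verification (via $(\matD\matS\matB)\bm1_\omega/\omega=\matD\matS\b_{avg}$) that $\tilde\matX_{opt}$ really is the minimizer of the sampled Frobenius problem, which the paper handles in the discussion preceding the theorem rather than inside the proof.
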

\begin{proof} 
We first construct \math{\matD} and \math{\matS} via Theorem~\ref{lem:regression} applied to \math{\matA} and \math{\b_{avg}}. The running time is \math{O\left(n\omega\right)}
(the time needed to compute \math{\b_{avg}}) plus the running time of Theorem~\ref{lem:regression}. The result is immediate from the following derivation:
\eqan{
\norm{\matA \tilde{\matX}_{opt} - \matB}_\mathrm{F}^2
&\buildrel(a)\over=&
\omega\norm{\matA\tilde\x_{opt}-\b_{avg}}^2
+\sum_{i=1}^\omega\norm{\b_{avg}-\matB^{(i)}}^2
\\
&\buildrel(b)\over\le&
\left(1+O\left(\sqrt{k/r}\right)\right)^2\omega\norm{\matA\x_{opt}-\b_{avg}}^2
+\sum_{i=1}^\omega\norm{\b_{avg}-\matB^{(i)}}^2
\\
&\le&
\left(1+O\left(\sqrt{k/r}\right)\right)^2\left(\omega\norm{\matA\x_{opt}-\b_{avg}}^2
+\sum_{i=1}^\omega\norm{\b_{avg}-\matB^{(i)}}^2\right)
\\
&\buildrel(a)\over=&
\left(1+O\left(\sqrt{k/r}\right)\right)^2
\norm{\matA\matX_{opt} - \matB}_\mathrm{F}^2.
}
(a) follows by Lemma~\ref{lemma:multi2simple1}; (b) follows because
\math{\tilde\x_{opt}} is the output of a coreset regression as in Theorem~\ref{lem:regression}. Finally, \math{r>k+1} implies that \math{\left(1+O\left(\sqrt{k/r}\right)\right)^2=1+O\left(\sqrt{k/r}\right)}.
\end{proof}

\subsection{Arbitrarily-Constrained Multiple-Response Regression}\label{sec:arb}

Multi-objective regression is a special case of constrained multiple-response regression for which we can efficiently obtain the coresets. In the general case,  the problem still reduces to
simple regression, but the coresets are now larger. The objective of arbitrarily-constrained multiple-response regression is
\begin{equation}\label{eqn:pppp2}
\min_{\matX\in\cl D\subseteq\R^{d\times\omega}} \norm{\matA\matX-\matB}_\mathrm{F}. 
\end{equation}
Since \math{\R^{d\times\omega}}  is isomorphic to
\math{\R^{d\omega}}, we can view \math{\matX\in\R^{d\times\omega}} as a ``stretched out'' vector \math{\hat\matX\in\R^{d\omega}}; corresponding to the domain \math{\cl D} is the domain \math{\hat{\cl D}\subseteq\R^{d\omega}}. Similarly, we can stretch out  \math{\matB\in\R^{n\times \omega}} to \math{\hat\matB\in\R^{n\omega}}. To complete the transformation to simple linear regression, we build a transformed block-diagonal data matrix \math{\hat\matA} from \math{\matA}, by repeating \math{\omega} copies of \math{\matA} along the diagonal:
\mand{
\hat\matA
=
\left[
\begin{matrix}
\matA&&&\\
&\matA&&\\
&&\ddots\\
&&&\matA
\end{matrix}
\right]
\in\R^{n\omega\times d\omega},
\qquad
\hat\matX=
\left[
\begin{matrix}
\matX^{(1)}\\
\matX^{(2)}\\
\vdots\\
\matX^{(\omega)}\\
\end{matrix}
\right]
\in\R^{d\omega},
\qquad
\hat\matB=
\left[
\begin{matrix}
\matB^{(1)}\\
\matB^{(2)}\\
\vdots\\
\matB^{(\omega)}\\
\end{matrix}
\right]
\in\R^{n\omega}.
}
\begin{lemma}
\label{lemma:Gen2Simp}
For all \math{\matA}, \math{\matX} and \math{\matB} of appropriate dimensions,
\math{\FNormS{\matA\matX-\matB}=\TNorm{\hat\matA\hat\matX-\hat\matB}^2}.
\end{lemma}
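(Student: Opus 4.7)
The plan is to compute the right-hand side directly by exploiting the block structure of $\hat\matA$ and the column-stacking structure of $\hat\matX$ and $\hat\matB$, and then to recognize the result as the column-wise decomposition of the Frobenius norm on the left.

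First I would observe that because $\hat\matA$ is block diagonal with $\omega$ copies of $\matA$ on the diagonal, and $\hat\matX$ is the vertical concatenation of the columns $\matX^{(1)},\ldots,\matX^{(\omega)}$, the block structure gives
\mand{
\hat\matA\hat\matX
=
\left[
\begin{matrix}
\matA\matX^{(1)}\\
\matA\matX^{(2)}\\
\vdots\\
\matA\matX^{(\omega)}
\end{matrix}
\right]
\in\R^{n\omega}.
}
Subtracting $\hat\matB$, which stacks $\matB^{(1)},\ldots,\matB^{(\omega)}$ in the same way, I get a stacked vector whose $i$-th block is exactly $\matA\matX^{(i)}-\matB^{(i)}$, which is also the $i$-th column of the matrix $\matA\matX-\matB$.

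Next I would apply the elementary identity that the squared $\ell_2$-norm of a vector formed by vertically concatenating blocks equals the sum of the squared $\ell_2$-norms of the blocks:
\mand{
\TNormS{\hat\matA\hat\matX-\hat\matB}
=\sum_{i=1}^{\omega}\TNormS{\matA\matX^{(i)}-\matB^{(i)}}.
}
Finally, by the column-wise definition of the Frobenius norm, the right-hand side is precisely $\FNormS{\matA\matX-\matB}$, completing the proof.

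There is no real obstacle here; the statement is essentially a bookkeeping identity that confirms the reduction described before the lemma is the correct one. The only thing to be careful about is matching the column index on $\matX$ and $\matB$ consistently with the block index on $\hat\matA$, which is guaranteed by the choice of stacking order in the definitions of $\hat\matX$ and $\hat\matB$.
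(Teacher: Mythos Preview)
Your proof is correct. The paper does not actually supply a proof of this lemma, treating it as an immediate bookkeeping identity; your direct block-by-block computation is exactly the natural verification.
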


\noindent Theorem~\ref{lem:regression} gives us
 coresets for this equivalent regression. Note that  \math{\rank(\hat\matA)\le \omega\cdot\rank(\matA)}. The coreset will identify
the important rows of \math{\matA} (the same row may get identified multiple times as different rows of \math{\hat\matA}), and the important \emph{elements} of \math{\matB}, because the entries in \math{\hat\matB} are elements of \math{\matB}, not rows of \math{\matB}. Let \math{\hat\matX_{opt}} be the solution constructed from the coreset, which minimizes \math{\norm{\hat\matA\hat\matX-\hat\matB}} over \math{\hat\matX\in\hat{\cl D}}, and let \math{\tilde\matX_{opt}\in{\cl D}} be the corresponding solution in the original domain \math{\cl D}. If \math{r} is the size of the coreset and \math{\rank(\matA)=k}, then, by Theorem~\ref{lem:regression},
\mld{
\norm{\matA\tilde\matX_{opt}-\matB}_\mathrm{F}^2
\le
\left(1+O\left(\sqrt{k\omega/r}\right)\right)
\FNormS{\matA\matX_{opt}-\matB}.
\label{eq:arb-multiple}
}
So, for the approximation ratio to be \math{1 + O(\epsilon)}, we set \math{r = O\left(k\omega/\epsilon^2\right)}. The running time would involve the time needed to compute the SVD of \math{[\hat\matA,\hat\matB]}.

Notice that the coresets are large and somewhat costly to compute and
they only work for the Frobenius norm. In the next section, using more sophisticated techniques, we will get smaller coresets for unconstrained regression in both the Frobenius and spectral norms.


\begin{algorithm}[h]
\begin{framed}
\textbf{Input:} $\matA\in\R^{n\times d}$ of rank $k$, \math{\matB\in\R^{n\times\omega} },
and  \math{r>k}. \\
\noindent \textbf{Output:}
sampling matrix \math{\matS \in \R^{r \times n}} and rescaling matrix \math{\matD \in \R^{r \times r}}.

\begin{algorithmic}[1]
\STATE Compute the SVD of \math{\matA}: \math{\matA=\matU_{\matA}\matSig_{\matA}\matV_{\matA}\transp}, where
\math{\matU_{\matA}\in\R^{n\times k}}, \math{\matSig_{\matA}\in\R^{k\times k}}, and \math{\matV_{\matA}\in\R^{d\times k}}; compute
\math{\matE=\matU_{\matA}\matU_{\matA}\transp\matB-\matB}.
\STATE
{\bf return} \math{[\matS,\matD]=MultipleSpectralSampling(\matU_{\matA},\matE,r)}
(see Lemma~\ref{lem:2setS})
\end{algorithmic}
\caption{Deterministic coresets for
multiple regression in spectral norm.}
\label{alg:GenS}
\end{framed}
\end{algorithm}

\section{Unconstrained Multiple-Response Regression}
\label{sec:Gen}

Consider the following problem: given a matrix $\matA\in\R^{n \times d}$ with rank $k$ and a matrix $\matB\in\R^{n \times \omega}$ with $\omega \ge 1$, we seek to identify the matrix $\matX_{opt}\in\R^{d\times\omega}$ that satisfies ($\xi =2$ and $\xi=\mathrm{F}$)
\begin{equation}\label{eqn:pppp3}
\matX_{opt} \in \arg\min_{\matX\in\R^{d\times\omega}}\norm{\matA\matX-\matB}_{\xi}^2.
\end{equation}
We can compute \math{\matX_{opt}} via the pseudoinverse of \math{\matA}, namely $\matX_{opt} = \matA^{\dagger}\matB$. If $\matS$ and $\matD$ are sampling and rescaling matrices respectively, then the coreset regression problem is:
\begin{equation}\label{eqn:pp2}
\tilde{\matX}_{opt} \in \arg\min_{\matX\in\R^{d\times\omega}}\norm{\matD\matS \left(\matA \matX - \matB\right)}_{\xi}^2 = \arg\min_{\matX\in\R^{d\times\omega}}\norm{\matD\matS \matA \matX - \matD\matS\matB}_{\xi}^2.
\end{equation}
The solution of the coreset regression problem is  $\tilde{\matX}_{opt} = \left(\matD\matS\matA\right)^{\dagger}\matD\matS\matB$. The main results in this section are presented in
Theorems~\ref{lem:regressionG2a} and~\ref{lem:regressionGF}.

\begin{theorem} [Spectral norm]\label{lem:regressionG2a}
Given a matrix $\matA\in\R^{n \times d}$ with rank $k$, a matrix $\matB\in\R^{n \times \omega}$, and $r > k$, Algorithm~\ref{alg:GenS} deterministically constructs  matrices
$\matS \in \R^{r \times n}$ and $\matD \in \R^{r \times r}$ such that the solution of the problem of Eqn.~(\ref{eqn:pp2}) satisfies: 
$$\TNormS{\matA \tilde{\matX}_{opt} - \matB} \leq \TNormS{\matA \matX_{opt} - \matB} +
\left( \frac{ 1 + \sqrt{{\omega}/{r}} }{1 - \sqrt{{k}/{r}}} \right)^2 \TNormS{\matA \matX_{opt} - \matB}.$$
The running time of the proposed algorithm is $T\left(\matU_{\matA}\right)+O\left(rn\left(k^2+\omega^2\right)\right)$, where $T\left(\matU_{\matA}\right)$ is the time needed to compute the left singular vectors of $\matA$.
\end{theorem}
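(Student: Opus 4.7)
The plan is to split $\matA\tilde\matX_{opt} - \matB$ into two orthogonal pieces, express the non-trivial piece in closed form via the SVD, and then invoke the joint sparsification guarantees of Lemma~\ref{lem:2setS}. Writing
$$\matA\tilde\matX_{opt} - \matB \;=\; \matA(\tilde\matX_{opt} - \matX_{opt}) + \matE, \qquad \matE \;=\; \matA\matX_{opt} - \matB \;=\; (\matU_{\matA}\matU_{\matA}\transp - \matI)\matB,$$
the first summand lies in $\col(\matU_{\matA})$ while $\matU_{\matA}\transp\matE = 0$, so for every unit vector $\y$ the vectors $\matA(\tilde\matX_{opt} - \matX_{opt})\y$ and $\matE\y$ are orthogonal. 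Applying Pythagoras and then taking the supremum over $\|\y\|_2=1$ on both sides yields
$$\TNormS{\matA\tilde\matX_{opt} - \matB} \;\leq\; \TNormS{\matA(\tilde\matX_{opt} - \matX_{opt})} + \TNormS{\matE},$$
which already matches the additive shape of the target bound, with $\TNormS{\matE}$ supplying the leading $\TNormS{\matA\matX_{opt}-\matB}$ term.

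To handle the first summand, I would substitute $\matB = \matA\matX_{opt} - \matE$ into $\tilde\matX_{opt} = (\matD\matS\matA)^+ \matD\matS\matB$, obtaining $\tilde\matX_{opt} = (\matD\matS\matA)^+ \matD\matS\matA\,\matX_{opt} - (\matD\matS\matA)^+\matD\matS\matE$. Using the SVD $\matA = \matU_{\matA}\Sigma_{\matA}\matV_{\matA}\transp$ and the fact that $\matD\matS\matU_{\matA}$ has full column rank $k$ (guaranteed by the sampling step), standard pseudoinverse manipulations give $(\matD\matS\matA)^+ = \matV_{\matA}\Sigma_{\matA}^{-1}(\matD\matS\matU_{\matA})^+$, hence $(\matD\matS\matA)^+\matD\matS\matA = \matV_{\matA}\matV_{\matA}\transp$, which fixes $\matX_{opt}$ because $\matX_{opt}=\matA^+\matB\in\col(\matV_{\matA})$, and also $\matA(\matD\matS\matA)^+ = \matU_{\matA}(\matD\matS\matU_{\matA})^+$. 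Together these produce the key identity
$$\matA(\tilde\matX_{opt} - \matX_{opt}) \;=\; -\,\matU_{\matA}(\matD\matS\matU_{\matA})^+\matD\matS\matE,$$
and using $\TNorm{\matU_{\matA}}=1$ and submultiplicativity of the spectral norm,
$$\TNormS{\matA(\tilde\matX_{opt} - \matX_{opt})} \;\leq\; \frac{\TNormS{\matD\matS\matE}}{\sigma_{\min}^2(\matD\matS\matU_{\matA})}.$$

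It remains to invoke the two spectral guarantees that Algorithm~\ref{alg:GenS}, via Lemma~\ref{lem:2setS} applied to $\matU_{\matA}$ and $\matE$, is designed to deliver: $\sigma_{\min}(\matD\matS\matU_{\matA}) \geq 1 - \sqrt{k/r}$ for the orthonormal factor and $\TNorm{\matD\matS\matE} \leq (1 + \sqrt{\omega/r})\TNorm{\matE}$ for the residual. Plugging these into the previous display and combining with the orthogonal decomposition yields exactly the advertised factor $\bigl((1+\sqrt{\omega/r})/(1-\sqrt{k/r})\bigr)^2$ multiplying $\TNormS{\matA\matX_{opt}-\matB}$; the running time follows from the cost $T(\matU_{\matA})$ of the SVD plus the $O(rn(k^2+\omega^2))$ cost of the dual-set routine, whose two quadratic terms match the column counts $k$ and $\omega$ of the two target matrices. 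The main obstacle is contained in Lemma~\ref{lem:2setS} itself: unlike the single-set result of Lemma~\ref{lem:1set}, the sampling weights must simultaneously pin the minimum singular value of $\matU_{\matA}$ from below \emph{and} keep the spectral norm of an independent matrix $\matE$ from inflating, which forces a substantially more delicate joint potential-function argument. Once that tool is in hand, the steps above are essentially routine linear algebra.
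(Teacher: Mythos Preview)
Your proposal is correct and follows essentially the same route as the paper: the paper packages your orthogonal decomposition and the identity $\matA(\tilde\matX_{opt}-\matX_{opt})=-\matU_\matA(\matD\matS\matU_\matA)^+\matD\matS\matE$ into a standalone Lemma~\ref{lem:generic} (derived via $\matB=(\matU_\matA\matU_\matA\transp+\matU_\matA^{\perp}(\matU_\matA^{\perp})\transp)\matB$ rather than your substitution $\matB=\matA\matX_{opt}-\matE$, but the algebra is equivalent), and then bounds the second term exactly as you do using Lemma~\ref{lem:2setS}. One small point to make explicit: Lemma~\ref{lem:2setS} gives $\TNorm{\matD\matS\matE}<(1+\sqrt{\rho_\matE/r})\TNorm{\matE}$ in terms of $\rho_\matE=\rank(\matE)$, and the paper inserts the separate observation $\rho_\matE\le\omega$ (Lemma~\ref{lem:simple}) to reach the stated factor---you used $\omega$ directly, which is fine but skips that one line.
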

Since \math{r>k}, the approximation ratio is 
\math{2+O(\sqrt{{\omega}/{r}}+{\omega}/{r}+\sqrt{{k}/{r}})}.
So, for $\epsilon>0$ and $r = O((\omega+k)/\epsilon^2)$ the approximation ratio is
$2+\epsilon$.
For \math{r>\omega}, the approximation is \math{O(1)}, while for 
\math{r<\omega},  is asymptotic to 
\math{O\left(\omega/r\right)}. 
We will argue that this is nearly optimal by providing a 
matching lower bound in Theorem~\ref{lem:lowerS}. 

\begin{theorem} [Frobenius norm]\label{lem:regressionGF}
Given matrix $\matA\in\R^{n \times d}$ of rank  $k$, matrix
$\matB\in\R^{n \times \omega}$, and $r > k,$ Algorithm~\ref{alg:GenF} deterministically constructs a sampling matrix $\matS \in \R^{r \times n}$ and a rescaling matrix $\matD \in \R^{r \times r}$ such that the solution of the problem of Eqn.~(\ref{eqn:pp2}) satisfies: 
$$\FNormS{\matA \tilde{\matX}_{opt} - \matB} \leq \FNormS{\matA \matX_{opt} - \matB} +
\frac{1}{\left(1 - \sqrt{{k}/{r}} \right)^2} \FNormS{\matA \matX_{opt} - \matB}.$$
The running time of the proposed algorithm is $T\left(\matU_{\matA}\right)+O\left(rnk^2\right)$, where $T\left(\matU_{\matA}\right)$ is the time needed to compute the left singular vectors of $\matA$.
\end{theorem}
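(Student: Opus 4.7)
The plan is to follow the standard ``Pythagorean split plus residual control'' template for projection-based least-squares analysis, with the two-set deterministic sparsifier of Lemma~\ref{lem:2setF} doing all the heavy lifting. Abbreviate $\matW=\matD\matS$. Since $\matA\matX_{opt}=\matU_\matA\matU_\matA^{T}\matB$ is the projection of $\matB$ onto $\col(\matA)$, the residual $\matA\matX_{opt}-\matB=\matE$ (as constructed by Algorithm~\ref{alg:GenF}) is orthogonal to $\col(\matA)$, while $\matA\tilde{\matX}_{opt}-\matA\matX_{opt}$ lies in $\col(\matA)$. A short trace computation shows that the cross term in the Frobenius inner product vanishes, so Pythagoras gives
\begin{equation*}
\FNormS{\matA\tilde{\matX}_{opt}-\matB}
=\FNormS{\matA\tilde{\matX}_{opt}-\matA\matX_{opt}}+\FNormS{\matA\matX_{opt}-\matB}.
\end{equation*}
It therefore suffices to bound the first summand by $(1-\sqrt{k/r})^{-2}\FNormS{\matA\matX_{opt}-\matB}$.

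Next I would write the perturbation $\matA\tilde{\matX}_{opt}-\matA\matX_{opt}$ in closed form. Using the SVD $\matA=\matU_\matA\Sigma_\matA\matV_\matA^{T}$ together with the identity $\matA(\matW\matA)^{+}=\matU_\matA(\matW\matU_\matA)^{+}$, which is valid whenever $\matW\matU_\matA$ has rank $k$, and substituting $\matB=\matU_\matA\matU_\matA^{T}\matB-\matE$, a short calculation that invokes $(\matW\matU_\matA)^{+}\matW\matU_\matA=\matI_k$ collapses to
\begin{equation*}
\matA\tilde{\matX}_{opt}-\matA\matX_{opt}=-\matU_\matA(\matW\matU_\matA)^{+}\matW\matE.
\end{equation*}
Taking Frobenius norms, using that $\matU_\matA$ has orthonormal columns (so its prefactor drops out), and applying the submultiplicative inequality $\FNorm{\matP\matQ}\le\TNorm{\matP}\FNorm{\matQ}$, I obtain
\begin{equation*}
\FNormS{\matA\tilde{\matX}_{opt}-\matA\matX_{opt}}\le\TNormS{(\matW\matU_\matA)^{+}}\cdot\FNormS{\matW\matE}.
\end{equation*}

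The crux is then to have $(\matS,\matD)$ controlling \emph{both} quantities on the right simultaneously, which is precisely what the MultipleFrobeniusSampling subroutine of Lemma~\ref{lem:2setF} delivers on the inputs $\matU_\matA$ and $\matE$ prescribed by Algorithm~\ref{alg:GenF}: it returns $\matW=\matD\matS$ satisfying (i)~$\sigma_k(\matW\matU_\matA)\ge 1-\sqrt{k/r}$, hence $\TNormS{(\matW\matU_\matA)^{+}}\le(1-\sqrt{k/r})^{-2}$, and (ii)~$\FNormS{\matW\matE}\le\FNormS{\matE}$. Chaining these two guarantees with the Pythagorean identity and the equality $\FNormS{\matE}=\FNormS{\matA\matX_{opt}-\matB}$ closes the argument. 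The main obstacle is conceptual rather than computational: a purely spectral sparsifier (like Lemma~\ref{lem:1set}) would keep $(\matW\matU_\matA)^{+}$ bounded but could blow up $\FNormS{\matW\matE}$, while a Frobenius-only importance sampler on $\matE$ cannot guarantee the column-rank condition on $\matW\matU_\matA$; the two-set BSS-style construction of Lemma~\ref{lem:2setF} is exactly what yields both bounds on a single pair $(\matS,\matD)$ of size $r$. Once that black box is in place, the stated runtime $T(\matU_\matA)+O(rnk^{2})$ is immediate from the SVD step followed by the sampling subroutine.
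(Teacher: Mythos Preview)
Your proposal is correct and follows essentially the same approach as the paper: the paper packages your Pythagorean split and the closed-form identity $\matA\tilde{\matX}_{opt}-\matA\matX_{opt}=-\matU_\matA(\matW\matU_\matA)^{+}\matW\matE$ into Lemma~\ref{lem:generic}, and then invokes Lemma~\ref{lem:2setF} exactly as you do to bound $\TNorm{(\matW\matU_\matA)^{+}}$ and $\FNorm{\matW\matE}$ simultaneously. You have simply inlined the proof of Lemma~\ref{lem:generic} rather than citing it, so the two arguments are identical in substance.
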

\begin{algorithm}[h]
\begin{framed}
\textbf{Input:} $\matA\in\R^{n\times d}$ of rank $k$, \math{\matB\in\R^{n\times\omega} },
and  \math{r>k}. \\
\noindent \textbf{Output:}
sampling matrix \math{\matS \in \R^{r \times n}} and rescaling matrix \math{\matD \in \R^{r \times r}}.

\begin{algorithmic}[1]
\STATE Compute the SVD of \math{\matA}: \math{\matA=\matU_{\matA}\matSig_{\matA}\matV_{\matA}\transp}, where
\math{\matU_{\matA}\in\R^{n\times k}}, \math{\matSig_{\matA}\in\R^{k\times k}}, and \math{\matV_{\matA}\in\R^{d\times k}}; compute
\math{\matE=\matU_{\matA}\matU_{\matA}\transp\matB-\matB}.
\STATE
{\bf return} \math{[\matS,\matD]=MultipleFrobeniusSampling(\matU_{\matA},\matE,r)}
(see Lemma~\ref{lem:2setF})
\end{algorithmic}
\caption{Deterministic coresets for
multiple regression in Frobenius norm.}
\label{alg:GenF}
\end{framed}
\end{algorithm}

The approximation ratio in the above theorem is
\math{2+O(\sqrt{k/r})}. In Theorem~\ref{lem:lowerF}, we will give
 a lower bound for the approximation ratio which is \math{1+\Omega(k/r)}. We conjecture that our lower bound can be achieved (deterministically), perhaps by a more 
sophisticated  algorithm or analysis. 

Finally, we note that the \emph{$\matB$-agnostic}
randomized construction of~\cite{DMM08}
achieves a $(1+\epsilon)$ approximation ratio using a significantly
larger coreset,
$r=O(k \log k/\epsilon^2)$.
Importantly,~\cite{DMM08} does not need any access to $\matB$ in order to construct the coreset, whereas our approach constructs coresets by carefully choosing important data points with respect to the particular target response matrix \math{\matB}. We will also discuss $\matB$-agnostic algorithms in Section~\ref{sec:agnostic} (Theorem~\ref{lem:regressionG2agnostic}) and we will present matching lower bounds in Section~\ref{sec:lower}.

\subsection{Proofs of Theorems \ref{lem:regressionG2a} and \ref{lem:regressionGF}}

We will make heavy use of
facts from Section~\ref{sec:secBack} in the Appendix. We start with a few simple lemmas.
\begin{lemma} \label{lem:simple}
Let $\matE=\matA \matX_{opt} - \matB \in \R^{n \times \omega}$ be the regression residual. Then, \math{\rank(\matE)\le\min\{\omega, n-k\}}.
\end{lemma}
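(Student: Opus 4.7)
The plan is to express the residual $\matE$ in terms of the projector onto the column space of $\matA$ and then bound its rank using a projection-rank argument together with the obvious dimensional bound.

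First I would use the fact that for unconstrained multiple-response regression the optimum is $\matX_{opt}=\matA^{+}\matB$, which gives $\matA\matX_{opt}=\matA\matA^{+}\matB=\matU_{\matA}\matU_{\matA}\transp\matB$, where $\matU_{\matA}\in\R^{n\times k}$ is the matrix of left singular vectors of $\matA$ (this uses only that $\rank(\matA)=k$). Substituting into the definition of $\matE$ yields
\[
\matE=\matA\matX_{opt}-\matB=-(\matI_{n}-\matU_{\matA}\matU_{\matA}\transp)\matB.
\]

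Next I would invoke two elementary rank facts. The $\omega$ bound is immediate since $\matE\in\R^{n\times\omega}$, so $\rank(\matE)\le\omega$. For the $n-k$ bound, note that $\matI_{n}-\matU_{\matA}\matU_{\matA}\transp$ is the orthogonal projector onto the orthogonal complement of the column space of $\matA$; since $\dim(\col(\matA))=k$, this projector has rank exactly $n-k$. By subadditivity/submultiplicativity of rank,
\[
\rank(\matE)\le\min\bigl\{\rank(\matI_{n}-\matU_{\matA}\matU_{\matA}\transp),\ \rank(\matB)\bigr\}\le n-k.
\]
Combining the two bounds gives $\rank(\matE)\le\min\{\omega,\,n-k\}$, as claimed.

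There is no real obstacle here: the only thing to be careful about is writing $\matA\matX_{opt}$ cleanly as a projection onto $\col(\matA)$, which is standard given the SVD-based definition of the pseudoinverse. No additional machinery from the paper is needed for this step.
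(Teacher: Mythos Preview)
Your proof is correct and essentially identical to the paper's: the paper writes $\matE=(\matI_n-\matU_\matA\matU_\matA\transp)\matB=\matU_\matA^{\perp}(\matU_\matA^{\perp})\transp\matB$ (ignoring the sign, as you noted) and then invokes $\rank(\matX\matY)\le\min\{\rank(\matX),\rank(\matY)\}$ to get both bounds at once. The only cosmetic difference is that the paper uses the explicit complement $\matU_\matA^{\perp}(\matU_\matA^{\perp})\transp$ rather than the projector $\matI_n-\matU_\matA\matU_\matA\transp$, but these are the same matrix.
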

\begin{proof}
Using our notation, $\matA \matX_{opt} - \matB=-\left(\matI_n-\matU_\matA \matU_\matA\transp\right)\matB=-\matU_\matA^{\perp} \left(\matU_\matA^{\perp}\right)\transp \matB$. To conclude notice that
\math{\rank(\matX\matY) \le \min\{\rank(\matX),\rank(\matY)\}} for any matrices $\matX$ and $\matY$.
\end{proof}

\noindent We now present our main tool for obtaining approximation guarantees for coreset regression.
\begin{lemma}\label{lem:generic}
Assume that the rank of the matrix $\matD\matS \matU_\matA \in \R^{r \times k}$ is equal to $k$ (i.e., the matrix has full rank). Then, for $\xi = 2,\mathrm{F}$,
$$ \norm{ \matA \tilde{\matX}_{opt} - \matB }_\xi^2 \leq \XNormS{\matA \matX_{opt}-\matB} + \XNormS{(\matD\matS\matU_\matA)^{\dagger}\matD\matS\left(\matA \matX_{opt}-\matB\right)}.$$
\end{lemma}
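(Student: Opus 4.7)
\textbf{Proof proposal for Lemma \ref{lem:generic}.}

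The plan is to derive a clean closed form for $\matA\tilde\matX_{opt}$, split the residual into a component in $\col(\matU_\matA)$ and a component orthogonal to $\col(\matU_\matA)$, and then invoke the matrix ``Pythagoras'' identity/inequality for the two norms of interest. Throughout I will write $\matW = \matD\matS$ to lighten notation and exploit the assumption that $\matW\matU_\matA \in \R^{r \times k}$ has full column rank $k$.

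First I would factor $\matA = \matU_\matA\Sigma_\matA\matV_\matA\transp$ and observe that $\matW\matA = (\matW\matU_\matA)(\Sigma_\matA\matV_\matA\transp)$ is a product of a matrix with full column rank and a matrix with full row rank. The standard pseudoinverse identity for such a product yields $(\matW\matA)^+ = (\Sigma_\matA\matV_\matA\transp)^+(\matW\matU_\matA)^+ = \matV_\matA\Sigma_\matA^{-1}(\matW\matU_\matA)^+$. Hence
\mand{
\matA\tilde\matX_{opt} \;=\; \matA(\matW\matA)^+\matW\matB \;=\; \matU_\matA(\matW\matU_\matA)^+\matW\matB.
}
Next, split $\matB = \matU_\matA\matU_\matA\transp\matB + (\matI_n-\matU_\matA\matU_\matA\transp)\matB$. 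Since $\matA\matX_{opt} = \matU_\matA\matU_\matA\transp\matB$, the residual satisfies $\matE = \matA\matX_{opt} - \matB = -(\matI_n-\matU_\matA\matU_\matA\transp)\matB$. Applying $\matU_\matA(\matW\matU_\matA)^+\matW$ to this split and using the key identity $(\matW\matU_\matA)^+(\matW\matU_\matA) = \matI_k$ (from the full rank assumption), the $\matU_\matA\matU_\matA\transp\matB$ piece reproduces $\matU_\matA\matU_\matA\transp\matB$, so after subtracting $\matB$ we obtain
\mand{
\matA\tilde\matX_{opt} - \matB \;=\; \matE \;-\; \matU_\matA(\matW\matU_\matA)^+\matW\matE.
}

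Now the crucial structural observation: the columns of $\matE$ lie in $\col(\matU_\matA)^\perp$ (because $\matU_\matA\transp\matE = 0$), while the columns of $\matU_\matA(\matW\matU_\matA)^+\matW\matE$ lie in $\col(\matU_\matA)$. Therefore the cross-term matrix inner product $\matE\transp\bigl(\matU_\matA(\matW\matU_\matA)^+\matW\matE\bigr)$ vanishes identically. For the Frobenius norm this gives the exact Pythagorean identity $\FNormS{\matE - \matU_\matA(\matW\matU_\matA)^+\matW\matE} = \FNormS{\matE} + \FNormS{\matU_\matA(\matW\matU_\matA)^+\matW\matE}$. For the spectral norm the same orthogonality gives the inequality $\TNormS{\matM_1+\matM_2} \le \TNormS{\matM_1} + \TNormS{\matM_2}$ whenever $\matM_1\transp\matM_2 = 0$, via the pointwise identity $\TNormS{(\matM_1+\matM_2)\v} = \TNormS{\matM_1\v} + \TNormS{\matM_2\v}$ followed by supremum over unit $\v$.

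Finally, since $\matU_\matA$ has orthonormal columns, both $\TNorm{\cdot}$ and $\FNorm{\cdot}$ are left-invariant under multiplication by $\matU_\matA$, so $\XNorm{\matU_\matA(\matW\matU_\matA)^+\matW\matE} = \XNorm{(\matW\matU_\matA)^+\matW\matE}$ for $\xi = 2, \mathrm{F}$. Substituting $\matW = \matD\matS$ and $\matE = \matA\matX_{opt}-\matB$ yields exactly the stated bound. The main obstacle is the bookkeeping in step one: the pseudoinverse factorization $(\matW\matA)^+ = \matV_\matA\Sigma_\matA^{-1}(\matW\matU_\matA)^+$ is the only place the full-rank hypothesis is used, and it must be justified carefully (the rank assumption plus $\Sigma_\matA$ invertible and $\matV_\matA$ having orthonormal columns guarantees both factors in $\matW\matU_\matA \cdot \Sigma_\matA\matV_\matA\transp$ have the required full ranks). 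Once this is in place, the rest is a two-line decomposition and orthogonality argument that handles both norms simultaneously.
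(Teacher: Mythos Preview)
Your proposal is correct and follows essentially the same route as the paper: reduce $\matA\tilde\matX_{opt}$ to $\matU_\matA(\matW\matU_\matA)^+\matW\matB$ via the pseudoinverse identity (the paper just cites ``properties of the pseudo-inverse''), split $\matB$ along $\col(\matU_\matA)$ and its complement, use $(\matW\matU_\matA)^+\matW\matU_\matA=\matI_k$ to cancel the in-range part, and finish with matrix-Pythagoras plus unitary invariance under $\matU_\matA$. One small correction to your closing remark: the full-rank hypothesis is used \emph{twice}, once for the pseudoinverse factorization and once for $(\matW\matU_\matA)^+(\matW\matU_\matA)=\matI_k$ (you invoke both, so this is only a wording slip).
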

\begin{proof} 
To simplify notation, let \math{\matW=\matD\matS}. Using the SVD of $\matA$, \math{\matA=\matU_\matA \matSig_\matA \matV_\matA\transp}, we get:
\mand{
\norm{\matB - \matA \tilde{\matX}_{opt}}_\xi^2
=\norm{\matB -  \matU_\matA \matSig_\matA \matV_\matA\transp
(\matW\matU_\matA \matSig_\matA \matV_\matA\transp)^{\dagger}\matW\matB}_\xi^2
=
\norm{\matB -  \matU_\matA
(\matW\matU_\matA)^{\dagger}\matW\matB}_\xi^2,
}
where the last equality follows from properties of the pseudo-inverse and the fact that $\matW\matU_\matA$ is a full-rank matrix (see Lemma~\ref{lem:pseudo} in the Appendix).
Using \math{\matB=\left(\matU_\matA \matU_\matA\transp +
\matU_\matA^{\perp} \left(\matU_\matA^{\perp}\right)\transp \right)\matB}, we obtain
\eqan{
\norm{\matB - \matA \tilde{\matX}_{opt}}_\xi^2 &=& \norm{\matB -  \matU_\matA \left(\matW\matU_\matA\right)^{\dagger}\matW\left(\matU_\matA \matU_\matA\transp +
\matU_\matA^{\perp} \left(\matU_\matA^{\perp}\right)\transp \right)\matB}_\xi^2\\
&=& \norm{\matB - \matU_\matA  \left(\matW \matU_\matA\right)^{\dagger}\matW \matU_\matA \matU_\matA\transp \matB + \matU_\matA \left(\matW \matU_\matA\right)^{\dagger}\matW
\matU_\matA^{\perp} \left(\matU_\matA^{\perp}\right)\transp \matB}_\xi^2 \\
\nonumber  &\buildrel{(a)}\over{=}&
\XNormS{\matU_{\matA}^\perp (\matU_{\matA}^\perp)\transp\matB +
\matU_\matA (\matW\matU_\matA)^{\dagger}\matW \matU_\matA^{\perp}
(\matU_\matA^{\perp})\transp \matB} \\
\nonumber  &\buildrel{(b)}\over{\leq}&
\XNormS{\matU_{\matA}^\perp (\matU_{\matA}^\perp)\transp\matB} +
\XNormS{\matU_\matA (\matW\matU_\matA)^{\dagger}\matW \matU_\matA^{\perp}
(\matU_\matA^{\perp})\transp \matB}.
}
\math{(a)} follows from the assumption that the rank of \math{\matW\matU_\matA} is equal to $k$ and thus $ \left(\matW \matU_\matA\right)^{\dagger}\matW \matU_\matA = \matI_{k}$ and \math{(b)} follows by matrix-Pythagoras (Lemma \ref{lem:pythagoras}). To conclude, we use spectral submultiplicativity on the second term and the fact that \math{\matU_\matA^{\perp} \left(\matU_\matA^{\perp}\right)\transp \matB= -(\matA\matX_{opt}-\matB)}.
\end{proof}
\noindent This lemma provides a framework for coreset construction:
all we need are sampling and rescaling matrices $\matS$ and  $\matD$, such that $\rank(\matD\matS \matU_\matA)=k$ and 
$$\XNormS{\left(\matD\matS\matU_\matA\right)^{\dagger}\matD\matS\left(\matA \matX_{opt}-\matB\right)}$$ is small. The final ingredients for the proofs of Theorems \ref{lem:regressionG2a} and \ref{lem:regressionGF} are two matrix sparsification results that we present in the Appendix. 
\begin{lemma} 
\label{lem:2setS}
Let \math{\matY\in\R^{n\times \ell_1}} and \math{\matPsi\in\R^{n\times \ell_2}} with respective ranks \math{\rho_{\matY}}, and \math{\rho_\matPsi}. Given \math{r>\rho_\matY}, there exists a deterministic algorithm that runs in time $T_{SVD}\left(\matY\right) + T_{SVD}\left(\matPsi\right) + O(r n (\rho_{\matY}^2+\rho_{\matPsi}^2))$ and constructs sampling and rescaling matrices \math{\matS\in\R^{r\times n}},
\math{\matD\in\R^{r\times r}} satisfying:
\mand{
\rank\left(\matD\matS \matY\right) = \rank\left(\matY\right);
\
\TNorm{\left(\matD\matS\matY\right)^{\dagger}} < \frac{1}{1 - \sqrt{{\rho_{\matY}}/{r} }}
 \TNorm{\matY^{\dagger}};
\
\TNorm{\matD\matS\matPsi}
<\left( 1 + \sqrt{ \frac{\rho_{\matPsi}}{r} }\right) \TNorm{\matPsi}.
}
If $\matPsi = \matI_{n}$, the running time of the algorithm reduces to $T_{SVD}\left(\matY\right)+O\left(r n \rho_{\matY}^2\right)$. We write $\left[\matD, \matS\right] = MultipleSpectralSampling\left(\matY,\matPsi, r\right)$ to denote such a deterministic procedure.
\end{lemma}
\begin{lemma} 
\label{lem:2setF}
Let \math{\matY\in\R^{n\times \ell_1}} and \math{\matPsi\in\R^{n\times \ell_2}} with respective ranks \math{\rho_{\matY}}, and \math{\rho_\matPsi}. Given \math{r>\rho_\matY}, there exists a deterministic algorithm that runs in time $T_{SVD}(\matY)+O(r n \rho_{\matY}^2  + \ell_2 n)$ and constructs sampling and rescaling matrices \math{\matS\in\R^{r\times n}},
\math{\matD\in\R^{r\times r}} satisfying:
\mand{
\rank\left(\matD\matS \matY\right) = \rank\left(\matY\right);
\
\TNorm{\left(\matD\matS\matY\right)^{\dagger}} < \frac{1}{1 - \sqrt{{\rho_{\matY}}/{r}}}
 \TNorm{\matY^{\dagger}};  \quad
\
\FNorm{\matD\matS\matPsi} \le \FNorm{\matPsi}.
}
If $\matPsi = \matI_{n}$, the running time of the algorithm reduces to $T_{SVD}\left(\matY\right)+O\left(r n \rho_{\matY}^2\right)$. We write $\left[\matD, \matS\right] = MultipleFrobeniusSampling\left(\matY, \matPsi, r\right)$ to denote such a deterministic procedure.
\end{lemma}

\begin{proof} (of Theorem~\ref{lem:regressionG2a})
Theorem~\ref{lem:regressionG2a} follows from Lemmas~\ref{lem:generic} and~\ref{lem:2setS}. First, compute the SVD of \math{\matA} to obtain \math{\matU_{\matA} \in \R^{n \times k}}, and
let $\matE = \matA \matX_{opt} - \matB=\matU_{\matA}\matU_{\matA}\transp\matB-\matB$. Next, run the algorithm of Lemma~\ref{lem:2setS} to obtain $\left[\matD, \matS\right] = MultipleSpectralSampling\left(\matU_{\matA}, \matE, r\right)$. This algorithm runs in time $T_{SVD}\left(\matE\right)+ O\left( r n \left(k^2+ \rho_{\matE}^2\right) \right)$, where
\math{k} is the rank of $\matU_\matA$ and $\matA$. The total running time of the algorithm is \math{T(\matU_\matA)+T_{SVD}\left(\matE\right)+O\left( r n \left(k^2+ \rho_{\matE}^2\right) \right)
=T\left(\matU_\matA\right)+O\left(rn\left(k^2+\omega^2\right)\right)}.

Lemma \ref{lem:2setS} guarantees that $\matD$ and $\matS$ satisfy the rank assumption of Lemma \ref{lem:generic}. To conclude the proof, we bound the second term of Lemma~\ref{lem:generic}, using
the bounds of Lemma \ref{lem:2setS} and \math{\rho_\matE\le\min\{\omega, n-k\}\le\omega}:
\eqan{
\TNormS{
(\matD\matS\matU_\matA)^{\dagger}\matD\matS\left(\matA \matX_{opt}-\matB\right)}
&\le&
\TNormS{
\left(\matD\matS\matU_\matA\right)^{\dagger}}\TNormS{\matD\matS\left(\matA \matX_{opt}-\matB\right)}
\\
&\le&
\left(1 - \sqrt{{k}/{r} }\right)^{-2} \left(1 + \sqrt{{\omega}/{r}}\right)^2 \TNormS{\matA \matX_{opt}-\matB} .
}
\end{proof}

\begin{proof} (of Theorem~\ref{lem:regressionGF})
The proof is similar to the proof of Theorem~\ref{lem:regressionG2a}, using Lemma~\ref{lem:2setF} instead of Lemma~\ref{lem:2setS}.
Let $\left[\matD, \matS\right] = MultipleFrobeniusSampling\left(\matU_{\matA}, \matE, r\right)$
We bound the second term of Lemma~\ref{lem:generic}, using
the bounds of Lemma \ref{lem:2setF}:
\eqan{
\FNormS{
(\matD\matS\matU_\matA)^{\dagger}\matD\matS\left(\matA \matX_{opt}-\matB\right)}
&\le&
\TNormS{
\left(\matD\matS\matU_\matA\right)^{\dagger}}\FNormS{\matD\matS\left(\matA \matX_{opt}-\matB\right)}
\\
&\le&
\left(1 - \sqrt{{k}/{r} }\right)^{-2}  \FNormS{\matA \matX_{opt}-\matB} .
}

\end{proof}

\subsection{\math{\matB}-Agnostic Coreset Construction}
\label{sec:agnostic}

All the coreset construction algorithms that we presented so far carefully construct the coreset using knowledge of the response vector. If the algorithm does not need knowledge of \math{\matB} to construct the coreset, and yet can provide an approximation guarantee for every \math{\matB}, then the algorithm is \math{\matB}-agnostic. A \math{\matB}-agnostic coreset construction algorithm is appealing because the coreset, as specified by the sampling and rescaling matrices \math{\matS} and \math{\matD}, can be computed off-line and applied to any \math{\matB}.  We briefly digress to show how our methods can be extended to develop \math{\matB}-agnostic coreset constructions.

\begin{theorem} [\math{\matB}-agnostic Coresets] \label{lem:regressionG2agnostic}
Given a matrix $\matA\in\R^{n \times d}$ with rank $k$, a matrix $\matB\in\R^{n \times \omega}$, and $r > k$, there exists an algorithm to deterministically construct a sampling matrix $\matS$ and a rescaling matrix $\matD$ such that for any $\matB \in \R^{n \times \omega}$, the matrix $\tilde{\matX}_{opt}$ that solves the problem of Eqn.~(\ref{eqn:pp2}) satisfies: 
$$\XNormS{\matA \tilde{\matX}_{opt} - \matB} \leq \XNormS{\matA \matX_{opt} - \matB} +
\left( \frac{ 1 + \sqrt{{n}/{r}} }{1 - \sqrt{{k}/{r}}} \right)^2 \XNormS{\matA \matX_{opt} - \matB}.$$
The running time of the proposed algorithm is $T\left(\matU_{\matA}\right)+O\left(rnk^2\right)$, where $T\left(\matU_{\matA}\right)$ is the time needed to compute the left singular vectors of $\matA$.
\end{theorem}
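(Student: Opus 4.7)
The plan is to mimic the proof of Theorem~\ref{lem:regressionG2a}, but to replace the $\matB$-dependent second argument $\matE = \matA\matX_{opt} - \matB$ fed into the spectral sparsification routine by a matrix that does not depend on $\matB$. Lemma~\ref{lem:generic} already separates the coreset construction from the identity of $\matB$ via the bound
\[
\XNormS{\matA\tilde\matX_{opt}-\matB} \;\le\; \XNormS{\matA\matX_{opt}-\matB} + \TNormS{(\matD\matS\matU_\matA)^+}\,\XNormS{\matD\matS\matE},
\]
so everything reduces to constructing $\matS,\matD$ that (i) depend only on $\matU_\matA$, (ii) keep $\matD\matS\matU_\matA$ well conditioned, and (iii) keep the global operator norm of $\matD\matS$ small, so that we can afford to push $\matE$ through submultiplicatively.

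Concretely, I would compute the SVD of $\matA$ to obtain $\matU_\matA\in\R^{n\times k}$, and then invoke Lemma~\ref{lem:2setS} as $[\matD,\matS] = \textit{MultipleSpectralSampling}(\matU_\matA,\matI_n,r)$. Neither input uses $\matB$, so the resulting coreset is $\matB$-agnostic. The lemma guarantees (a) $\rank(\matD\matS\matU_\matA)=k$, (b) $\TNorm{(\matD\matS\matU_\matA)^+} < (1-\sqrt{k/r})^{-1}\TNorm{\matU_\matA^+} = (1-\sqrt{k/r})^{-1}$, and (c) $\TNorm{\matD\matS\matI_n} \le 1+\sqrt{n/r}$, since $\rank(\matI_n)=n$ and $\TNorm{\matI_n}=1$. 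The special-case running time claim in Lemma~\ref{lem:2setS} for $\Psi=\matI_n$ gives exactly $T(\matU_\matA)+O(rnk^2)$.

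To finish, I would combine these three facts via Lemma~\ref{lem:generic}. Since $\TNorm{\matD\matS}=\TNorm{\matD\matS\matI_n}\le 1+\sqrt{n/r}$, submultiplicativity of the operator norm against either $\xi=2$ or $\xi=\mathrm{F}$ gives
\[
\XNorm{\matD\matS\matE} \;\le\; \TNorm{\matD\matS}\,\XNorm{\matE} \;\le\; \bigl(1+\sqrt{n/r}\bigr)\,\XNorm{\matA\matX_{opt}-\matB},
\]
and plugging this together with the bound on $\TNorm{(\matD\matS\matU_\matA)^+}$ into Lemma~\ref{lem:generic} yields exactly the stated $\bigl(\tfrac{1+\sqrt{n/r}}{1-\sqrt{k/r}}\bigr)^2$ factor.

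The main conceptual point (rather than technical obstacle) is that paying the $\sqrt{n/r}$ in place of the $\sqrt{\omega/r}$ of Theorem~\ref{lem:regressionG2a} is the price of ignoring $\matB$: without knowledge of the residual we must bound $\matD\matS$ as an operator on all of $\R^n$, and $\matI_n$ has rank $n$. There is no obstacle beyond assembling the pieces; the argument is essentially a black-box application of Lemmas~\ref{lem:generic} and~\ref{lem:2setS} with $\Psi=\matI_n$, and its correctness for both $\xi=2$ and $\xi=\mathrm{F}$ comes for free from $\XNorm{\matD\matS\matE}\le\TNorm{\matD\matS}\XNorm{\matE}$, which holds for both norms.
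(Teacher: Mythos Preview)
Your proposal is correct and follows essentially the same route as the paper: invoke Lemma~\ref{lem:2setS} with $\Psi=\matI_n$ in place of $\matE$, then bound the second term in Lemma~\ref{lem:generic} via $\XNorm{(\matD\matS\matU_\matA)^+\matD\matS\matI_n\matE}\le\TNorm{(\matD\matS\matU_\matA)^+}\TNorm{\matD\matS\matI_n}\XNorm{\matE}$. The only cosmetic difference is that you pre-split the conclusion of Lemma~\ref{lem:generic} into $\TNormS{(\matD\matS\matU_\matA)^+}\cdot\XNormS{\matD\matS\matE}$, whereas the paper performs that submultiplicativity step explicitly after quoting the lemma.
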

\begin{proof}
The proof is similar
to the proof of Theorem~\ref{lem:regressionG2a}, except we now construct the sampling and rescaling matrices as
$\left[\matS, \matD \right] = MultipleSpectralSampling\left(\matU_{\matA}, \matI_{n}, r\right)$.
To bound the second term in Lemma~\ref{lem:generic}, we use
\eqan{
\XNormS{
\left(\matD\matS\matU_\matA\right)^{\dagger}\matD\matS\left(\matA \matX_{opt}-\matB\right)}
&=&
\XNormS{
\left(\matD\matS\matU_\matA\right)^{\dagger}\matD\matS\matI_n\left(\matA \matX_{opt}-\matB\right)}
\\
&\le&
\TNormS{
\left(\matD\matS\matU_\matA\right)^{\dagger}}\TNormS{\matD\matS\matI_n}
\XNormS{\left(\matA \matX_{opt}-\matB\right)},
}
and the bounds of Lemma~\ref{lem:2setS}.
\end{proof}

\noindent The above bound decreases with \math{r} and holds for any
 \math{\matB}, guaranteeing a constant-factor approximation
with a constant fraction of the data.
The approximation ratio is \math{O(n/r)}, which  seems quite weak. In the next section, we show that
 this result is indeed tight.

\section{Lower Bounds on Coreset Size}\label{sec:lower}

We have just seen a \math{\matB}-agnostic coreset construction algorithm
with a rather weak worst case guarantee of \math{O(n/r)} approximation error. 
We will now show that no deterministic \math{\matB}-agnostic 
coreset construction algorithm can guarantee 
a better error (Theorem~\ref{lem:agnosticD}) by providing lower bounds
on coreset size as a function of approximation error. These results
are also summarized in Table~\ref{table:t2}.

\remove{
\begin{table}
\begin{center}
\begin{tabular}{|l| c | c| c|}
  \hline
&&&\\[-7pt]
Type of Regression & Desired Approximation &Lower bound & Best Known Coreset Size \\[4pt]
\hline\hline
&&&\\[-8pt]
 Deterministic \math{\b}-agnostic                &\math{(1+\epsilon)} & $\frac{n}{1+\epsilon}$ [Thm. \ref{lem:agnosticD}]& $O(\frac{n}{1+\epsilon})$ [Thm. \ref{lem:regressionG2agnostic}]\\[7pt]
Randomized \math{\b}-agnostic                  &\math{(1+\epsilon)} w.p. \math{\frac12} & $\Omega(\frac{1}{\epsilon})$ [Thm. \ref{lem:agnosticR}] & $O(\frac{k\log k}{\epsilon^2})$ [Thm. 5 in~\cite{DMM08}]\\[7pt]
Multiple-regression ($\xi=\mathrm{F}$)      &\math{O(1)}& $\Omega(d)$ [Thm. \ref{lem:lowerF}] & $O(d)$ [Thm. \ref{lem:regressionGF}]\\[7pt]
Multiple-regression ($\xi=2$) &\math{O(1)}& $\Omega(\omega)$ [Thm. \ref{lem:lowerS}] &$O(d+\omega)$ [Thm. \ref{lem:regressionG2a}] \\[7pt]
\hline
\end{tabular}
\end{center}
\caption{Lower bounds on \emph{coreset size} for different formulations of
linear regression. {\sc Notation:} \math{d\ge1}  is the dimension and 
rank of 
the data matrix \math{\matA \in \R^{n \times d}};  \math{\omega} is the number of response 
vectors in the multiple-regression; \math{r>d+\omega} is the coreset size;
\math{\epsilon>0} is an accuracy parameter.
\label{table:t2}}
\end{table}
}
\begin{table}
\begin{center}
\begin{tabular}{|l| cl| cl|}
  \hline
&&&&\\[-7pt]
Type of Regression & \multicolumn{2}{c|}{Lower bound} & \multicolumn{2}{c|}{Known Approximation Ratio} \\[6pt]
\hline\hline
&&&&\\[-6pt]
 Deterministic \math{\b}-agnostic & $n/r$ &[Thm. \ref{lem:agnosticD}]& $O({n}/{r})$ &[Thm. \ref{lem:regressionG2agnostic}]\\[7pt]
Randomized \math{\b}-agnostic & 1+$\Omega({1}/{r})$&
[Thm. \ref{lem:agnosticR}] & $1+O(\sqrt{{k\log k}/{r}})$ &[Thm. 5 in~\cite{DMM08}]\\[7pt]
Multiple-regression ($\xi=2$) & $\omega/(r+1)$ &[Thm. \ref{lem:lowerS}] 
&2+$O(\sqrt{\omega/r}+\omega/r+\sqrt{k/r})$ &[Thm. \ref{lem:regressionG2a}] \\[7pt]
Multiple-regression ($\xi=\mathrm{F}$)&$1+\Omega(k/r)$ &
[Thm. \ref{lem:lowerF}] & $2+O\left(\sqrt{{k}/{r}}\right)$& [Thm. \ref{lem:regressionGF}]\\[7pt]
\hline
\end{tabular}
\end{center}
\caption{Lower bounds on the approximation ratio for different formulations of
linear regression and a coreset of size \math{r}. The
randomized algorithm in the second row of the table delivers 
a constant probability of success (all other algorithms are deterministic).
The lower bounds are values $\gamma$ such that \math{\norm{\matA\tilde\matX_{opt}-\matB}/\norm{\matA\matX_{opt}-\matB} \ge \gamma}. 
The approximation ratios are values $\beta$ such that \math{\norm{\matA\tilde\matX_{opt}-\matB}/\norm{\matA\matX_{opt}-\matB} \le \beta}.
In the first two rows in the table, $\matX_{opt}, \tilde\matX_{opt},$ and $\matB$ are vectors.
{\sc Notation:}  $n$ is the number of data points of dimension $d < n$; 
$k$ is
the rank of the matrix whose rows correspond to the $n$ data points;
\math{r} is the size of the coreset, \math{k<r<n};
 $\omega \geq 1$ is the number of ``response'' vectors
in multiple-response regression (in the last two rows in the table $\matX_{opt}, \tilde\matX_{opt},$ and $\matB$ have $\omega$ columns). 
\label{table:t2}}
\end{table}

\cite{DMM08} provides another \math{\matB}-agnostic coreset construction algorithm with $r = O(k \log k/\epsilon^2)$. For a fixed \math{\matB}, the method in \cite{DMM08} delivers a probabilistic bound on the approximation error. However, there are target matrices \math{\matB} for which the bound fails by an arbitrarily large amount. The probabilistic algorithms get away with this by brushing all these (possibly large) errors into a low probability event, with respect to random choices made in the algorithm. So, in some sense, these algorithms are not \math{\matB}-agnostic, in that they do not construct a coreset which works well for all \math{\matB} with some (say) constant probability. Nevertheless, the fact that they give a constant probability of success for a fixed but unknown \math{\matB} makes these algorithms interesting and useful. We will give a lower bound on the approximation ratio of such algorithms as well, for a given probability of success
(Theorem~\ref{lem:agnosticR}). Finally, we will give lower bounds on the size of the coreset for the general (non-agnostic) multiple regression setting (Theorems~\ref{lem:lowerS} and~\ref{lem:lowerF}).

\subsection{An Impossibility Result for \math{\matB}-Agnostic Coreset Construction}

We first present the lower bound for simple regression. Recall that a coreset construction algorithm is $\b$-agnostic if it constructs a coreset without knowledge of \math{\b}, and then provides an approximation guarantee for every \math{\b}. We show that no coreset can work for every \math{\b}; therefore a \math{\b}-agnostic coreset will be bad for some vector \math{\b}. In fact, there exists a matrix \math{\matA} such that every coreset has an associated ``bad'' \math{\b}.

\begin{theorem}[Deterministic \math{\b}-Agnostic coresets]\label{lem:agnosticD}
There exists a matrix \math{\matA\in\R^{n \times d}} such that for every coreset \math{\matC \in \R^{r \times d}} of size \math{r\le n}, there exists
\math{\b \in \R^n} (depending on \math{\matC}) for which
$$
\TNormS{\matA \tilde{\x}_{opt} - \b}
\geq \frac{n}{r}
\TNormS{\matA \x_{opt} - \b}.
$$
\end{theorem}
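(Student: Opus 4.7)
The plan is to exhibit a single matrix $\matA$ such that every $r$-row coreset admits an adversarial $\b$ forcing the approximation ratio up to $n/r$. The natural candidate is the all-ones column vector $\matA = \bm{1}_n \in \R^{n \times 1}$ (so $d = 1$ and $k = \rank(\matA) = 1$). The reason this is attractive: both the full regression and any coreset regression reduce to a one-dimensional weighted-averaging problem whose optimum can be written in closed form, so the ratio can be computed exactly rather than bounded.

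Given an arbitrary coreset specified by a row-sampling matrix $\matS$ selecting indices $I = \{i_1,\ldots,i_r\}\subseteq[n]$, I would define the adversary $\b \in \R^n$ to be the indicator vector of $I$: $b_i = 1$ if $i \in I$ and $b_i = 0$ otherwise. For the full regression, $x_{opt} = \tfrac{1}{n}\sum_i b_i = r/n$, giving residual
$$\TNormS{\matA x_{opt} - \b} \;=\; r(1 - r/n)^2 + (n-r)(r/n)^2 \;=\; \frac{r(n-r)}{n}.$$
For the coreset regression, since $b_{i_j} = 1$ for every $j$, the weighted objective is $\sum_{j=1}^{r} D_{jj}^2 (x - 1)^2$, minimized uniquely at $\tilde x_{opt} = 1$ for every positive diagonal $\matD$. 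Evaluating the full residual at $\tilde x_{opt} = 1$ yields $\TNormS{\matA \tilde x_{opt} - \b} = \sum_{i \notin I}(1-0)^2 = n - r$. Taking the quotient gives exactly $n/r$, as required.

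The one subtle point, which I would flag as the main obstacle worth checking, is that the theorem quantifies over coresets $\matC$, while the coreset user still has the freedom to pick \emph{any} positive diagonal rescaling $\matD$ for the weighted problem; a naively chosen $\b$ might be partially neutralised by a clever weighting. The construction above is designed precisely to side-step this: the restriction $\matS\b$ is a constant vector, so every weighted least-squares fit on the coreset returns the common value $1$ regardless of $\matD$. Therefore the adversary $\b$ need not anticipate $\matD$, and the lower bound holds uniformly in the algorithm's weight choices. The whole argument is computational once the right $\matA$ and $\b$ are chosen; no additional machinery is required.
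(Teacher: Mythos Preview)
Your proof is correct and rests on the same key idea as the paper's: pick $\b$ so that its restriction to the coreset rows is a \emph{constant} vector, which pins down $\tilde\x_{opt}$ independently of the rescaling $\matD$, and then compute the ratio.

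The constructions are complementary rather than identical. The paper takes $\matA$ to have orthonormal columns with first column $\bm1_n/\sqrt{n}$ and sets $\b$ to be the (normalized) indicator of the \emph{complement} of the coreset indices, so that $\b_c=\bm0$ and hence $\tilde\x_{opt}=\bm0$; it then upper-bounds $\TNormS{\matA\x_{opt}-\b}$ by projecting onto the first column only, obtaining $r/n$. You instead take $\matA=\bm1_n$ and $\b$ the indicator of the coreset itself, so $\b_c$ is the all-ones vector and $\tilde x_{opt}=1$, and you compute both residuals exactly. Your route is slightly more elementary (pure scalar arithmetic, no projection inequality) and hits the ratio $n/r$ with equality; the paper's version has the advantage of working verbatim for arbitrary $d\ge1$, whereas your $d=1$ choice would need the trivial padding $\matA=[\bm1_n,\bm0,\ldots,\bm0]$ (or any matrix whose column space contains $\bm1_n$) if a specific $d>1$ is demanded. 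Both arguments correctly handle the subtlety you flagged about robustness to the weight matrix~$\matD$.
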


\begin{proof}
Let \math{\matA} be any matrix with orthonormal columns whose first column is \math{\bm1_n/\sqrt{n}}, and consider any coreset \math{\matC} of size \math{r}. Let \math{\b=\bm1_{\overline\matC}/\sqrt{n-r}}, where \math{\bm1_{\overline\matC}} is the \math{n}-vector of \math{1}'s except at the coreset locations. So for the coreset
regression, \math{\b_c=\bm 0}, and so \math{\tilde\x_{opt}=\bm0_{d \times 1}}. Therefore, 
$$\TNormS{\matA\tilde\x_{opt}-\b}=\TNormS{\b}=1.$$ Let \math{\matP_\matA} project onto the columns of
\math{\matA} and \math{\matP_{\matA^{(1)}}} project onto  the first column of \math{\matA}. The following sequence establishes the result:
\mand{
\TNormS{\matA \x_{opt} - \b}
=
\TNormS{(\matI - \matP_\matA)\b}
\le
\TNormS{(\matI - \matP_{\matA^{(1)}})\b}
=
\frac{r}{n}
}
\end{proof}

\noindent We now consider randomized algorithms that construct a coreset without looking at \math{\b} (e.g. \cite{DMM08}). These algorithms work for any fixed (but unknown) \math{\b}, and
deliver a probabilistic approximation guarantee for any single  fixed \math{\b}; in some sense they are \math{\b}-agnostic. By the previous discussion, the returned coreset must fail for some \math{\b}, i.e., the probabilistic guarantee does not hold for all \math{\b}, and, when it fails, it could do so with very bad error. We will now present a lower bound on the approximation accuracy of such existing randomized algorithms for coreset construction, even for a single \math{\b}.

First, we define randomized coreset construction algorithms. Let \math{\matC_1,\matC_2,\ldots,\matC_{\choose{n}{r}}} be the
\math{\choose{n}{r}} different coresets of size \math{r}. A randomized
algorithm assigns probabilities
\math{p_1,p_2,\ldots,p_{\choose{n}{r}}} to each coreset, and selects
one according to these probabilities. The probabilities \math{p_i}
may depend on~\math{\matA}. The algorithm is \math{\b}-agnostic if
 the probabilities \math{p_i} do not depend on \math{\b}.
As usual, let \math{r} be the size of the coreset.

\begin{theorem}[Probabilistic $\b$-Agnostic Coresets]
\label{lem:agnosticR}
For any randomized \math{\b}-agnostic coreset construction algorithm, and any integer
\math{0\le \ell\le n-r}, there exists \math{\matA\in\R^{n\times d}}
 and \math{\b\in \R^n}, such that, with probability at least
\math{\choose{n-r}{\ell}/\choose{n}{\ell}},
$$
\TNormS{\matA \tilde{\x}_{opt} - \b }  \geq \frac{n}{n-\ell}
\TNormS{\matA \x_{opt} - \b}.
$$
\end{theorem}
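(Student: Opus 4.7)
The plan is to lift the argument of Theorem~\ref{lem:agnosticD} into the randomized setting via a Yao-style averaging argument. First, I would take $\matA=\bm1_n/\sqrt{n}\in\R^{n\times 1}$, so that for every $\b\in\R^n$ the full regression residual is $\TNormS{\matA\x_{opt}-\b}=\sum_i(b_i-\bar b)^2$, where $\bar b$ is the mean of $\b$. For any realized coreset indexed by $R\subseteq[n]$ with $|R|=r$ and any positive rescaling matrix $\matD$, the weighted coreset regression similarly produces $\matA\tilde\x_{opt}=\tilde b_R\bm1_n$, where $\tilde b_R$ is the $\matD$-weighted mean of $\{b_i\}_{i\in R}$. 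In particular, whenever $b_i=0$ for every $i\in R$, we have $\tilde\x_{opt}=0$ irrespective of the weights $\matD$.

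Next, I would have the adversary set $\b=\bm1_T$ for some $T\subseteq[n]$ with $|T|=\ell$. A direct computation gives
\mand{
\TNormS{\matA\x_{opt}-\b}=\ell(1-\ell/n)^2+(n-\ell)(\ell/n)^2=\ell(n-\ell)/n,
}
and whenever the drawn coreset avoids $T$, i.e.\ $R\cap T=\emptyset$, the above observation forces $\tilde\x_{opt}=0$, so $\TNormS{\matA\tilde\x_{opt}-\b}=\TNormS{\b}=\ell$. On this event the ratio of residuals is exactly $n/(n-\ell)$, matching the claimed bound.

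It then remains to exhibit a single set $T$ for which the event $R\cap T=\emptyset$ has probability at least $\choose{n-r}{\ell}/\choose{n}{\ell}$ under the algorithm's own distribution over coresets. Here I would average over a uniformly random $T$ among the $\choose{n}{\ell}$ size-$\ell$ subsets of $[n]$ and swap expectations using the \math{\b}-agnostic property (the distribution of $R$ does not depend on $T$):
\mand{
\Exp_T[\Pr_R[R\cap T=\emptyset]]=\Exp_R[\Pr_T[T\subseteq[n]\setminus R]]=\choose{n-r}{\ell}/\choose{n}{\ell},
}
where the last equality uses that for every realized $R$ there are exactly $\choose{n-r}{\ell}$ size-$\ell$ subsets of $[n]\setminus R$. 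A deterministic $T$ achieving at least this average then provides the desired bad target $\b=\bm1_T$.

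I do not foresee a genuine technical obstacle here: the argument reduces to two one-line calculations plus a counting identity. The one subtlety worth flagging is that in the randomized model the rescaling weights $\matD$ may depend on the realized $R$, but this is harmless, because on the event $R\cap T=\emptyset$ every coordinate of $\b_c$ vanishes and so $\tilde\x_{opt}=0$ no matter how $\matD$ is chosen.
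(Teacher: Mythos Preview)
Your proposal is correct and follows essentially the same approach as the paper: choose $\matA$ with the all-ones direction, take $\b$ to be (an indicator of) a size-$\ell$ set $T$ disjoint from the realized coreset, and use an averaging/double-counting argument over all $T$ to find one that is avoided with probability at least $\binom{n-r}{\ell}/\binom{n}{\ell}$. The only cosmetic differences are that you specialize to $d=1$ (which cleanly forces uniqueness of $\tilde\x_{opt}$) and phrase the counting as a Fubini/expectation swap over uniformly random $T$, whereas the paper sums $\Pr_R[R\cap T\neq\emptyset]$ over all $T$ directly; both routes yield the identical bound.
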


\begin{proof}
Let \math{\matA} be any matrix with orthonormal columns whose first
column is \math{\bm1_n/\sqrt{n}}, as in the proof of
Theorem~\ref{lem:agnosticD}.
 Let \math{\matT} be a set of size \math{\ell\le n-r}.
The neighborhood \math{N(\matT)} is the set of coresets (of size \math{r})
that have
 non-empty
intersection with \math{\matT}.
Every coreset appears in
\math{\choose{n}{\ell}-\choose{n-r}{\ell}} such neighborhoods
(the number of sets of size
\math{\ell} which intersect with a coreset of size \math{r}).
Let \math{\matC} be the random coreset (of size \math{r}) selected by the
algorithm.
Let
\math{\Probab{\matC\in N(\matT)}} 
be the probability that the coreset selected by the
algorithm is in
\math{N\left(\matT\right)}; then, 
\math{\Probab{\matC\in 
N(\matT)}=\sum_{\matC_i\in N\left(\matT\right)} \Probab{\matC_i}}.
Therefore,
\mand{
\sum_{\matT}\Probab{\matC\in N(\matT)}
=
\sum_{\matT}\sum_{\matC_i\in N\left(\matT\right)}\Probab{\matC_i}
=\choose{n}{\ell}-\choose{n-r}{\ell},
}
where the last equality follows because each coreset appears exactly
\math{\choose{n}{\ell}-\choose{n-r}{\ell}}
times in the summation and \math{\sum_i \Probab{\matC_i} = 1}.
Thus, there is at least one set \math{\matT^*} for which
\mand{
\Probab{\matC\in N(\matT^*)}\le
\frac{\choose{n}{\ell}-\choose{n-r}{\ell}}{\choose{n}{\ell}}=
1-\frac{\choose{n-r}{\ell}}{\choose{n}{\ell}}.
}
So, with probability at least \math{\choose{n-r}{\ell}/\choose{n}{\ell}},
the selected coreset does not intersect with \math{\matT^*}. Select
\math{\b=\bm1_{\matT^*}} (the unit vector which is \math{1/\sqrt{\ell}}
 at the indices
corresponding to \math{\matT^*}).
Now, with probability at least
\math{\choose{n-r}{\ell}/\choose{n}{\ell}},
\math{\tilde\x_{opt}=\bm0}, and the
analysis in the proof of Theorem \ref{lem:agnosticD} shows that
\mand{\TNormS{\matA\tilde\x_{opt}-b}\ge \frac{n}{n-\ell}
\TNormS{\matA\x_{opt}-b}.}
\end{proof}

By Stirling's formula, after some
algebra, the probability \math{\choose{n-r}{\ell}/\choose{n}{\ell}} is
asymptotic
 to~\math{e^{-2r\ell/n}}. Setting \math{\ell=\Theta(n/r)} gives a success probability that is a constant. Then, the approximation ratio cannot be better than \math{1+\Omega(1/r)}. With regard to high probability (approaching one) algorithms, consider \math{\ell=n\log n /2r} to conclude  that if the success probability is at least \math{1-1/n}, the approximation ratio is no better than \math{1+\log(n)/(2r-\log n)}.

\subsection{Lower Bounds for Non-Agnostic Multiple Regression}
\label{sec:lowerG}

For both the spectral and the Frobenius norm,
we now consider non-agnostic unconstrained multiple regression, and
give lower bounds for coresets of size $r > d=\rank(\matA)$ (for simplicity,
we set \math{\rank(\matA)=d}). The results are presented
in Theorems~\ref{lem:lowerS} and~\ref{lem:lowerF}.

\begin{theorem}[Spectral Norm]
\label{lem:lowerS}
There exists $\matA\in\R^{n \times d}$ with rank $d$ and $\matB\in\R^{n \times \omega}$
such that for any
$r >d$ and
any sampling and rescaling matrices
 $\matS \in \R^{r \times n}$ and  $\matD \in \R^{r \times r}$,
the solution to the coreset regression
$\tilde{\matX}_{opt}= (\matD \matS\matA)^{\dagger} \matD \matS \matB \in \R^{d \times \omega}$ satisfies
$$ \norm{\matA \tilde{\matX}_{opt} - \matB} _2^2
\geq \frac{\omega}{r+1}  \TNormS{\matA \matX_{opt} - \matB} .$$
\end{theorem}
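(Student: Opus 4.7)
My plan is to exhibit a single extremal instance with $d=1$, $\matA = \bm 1_n \in \R^{n\times 1}$, and $\matB = \matI_n$ (so that $\omega = n$), and then to show that every coreset of size $r$ forces a residual whose spectral norm squared is at least $n/r$ times the optimum. First I would compute the full-data optimum: $\matX_{opt} = (\matA\transp\matA)^{-1}\matA\transp\matB = \tfrac{1}{n}\bm 1_n\transp$, so that the residual is the centering matrix $\matM_{opt} = \matI_n - \tfrac{1}{n}\bm 1_n \bm 1_n\transp$, whose nonzero eigenvalues all equal $1$. Hence $\TNormS{\matM_{opt}} = 1$.

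Next, for any positive diagonal rescaling $\matD$ with entries $d_1,\dots,d_r$ and any sampling matrix $\matS$ picking indices $i_1,\dots,i_r$, I would unwind the pseudoinverse to show that the coreset solution collapses to a row probability vector: $\tilde{\matX}_{opt} = \p\transp$, where $p_{i_k} = d_k^2/\sum_\ell d_\ell^2$ and $p_j = 0$ for $j \notin \{i_1,\dots,i_r\}$. This follows from $\matD\matS\matA = \matD\bm 1_r$ (whose pseudoinverse is $(\sum_\ell d_\ell^2)^{-1}\bm 1_r\transp\matD$) together with $\matB = \matI_n$. The full-data residual then becomes $\matM = \bm 1_n \p\transp - \matI_n$, and only a lower bound on $\TNorm{\matM}$ remains.

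The heart of the proof is the eigen-structure of
\mand{\matM\transp\matM = n\p\p\transp - \p\bm 1_n\transp - \bm 1_n\p\transp + \matI_n.}
A short calculation, using $\p\transp\bm 1_n = 1$, shows that $\matM\transp\matM\,\bm 1_n = \bm 0$ and $\matM\transp\matM\,(n\p - \bm 1_n) = n\|\p\|_2^2\,(n\p - \bm 1_n)$. Since $\p$ is supported on at most $r < n$ coordinates, the vector $n\p - \bm 1_n$ is nonzero, so $n\|\p\|_2^2$ is an eigenvalue of $\matM\transp\matM$, giving $\TNormS{\matM} \ge n\|\p\|_2^2$. Cauchy--Schwarz applied to $\p$ (a probability vector with support of size at most $r$) yields $\|\p\|_2^2 \ge 1/r$, so $\TNormS{\matM} \ge n/r$. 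Combining with $\TNormS{\matM_{opt}} = 1$ and $\omega = n$ produces the claimed ratio $n/r = \omega/r \ge \omega/(r+1)$.

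The main obstacle is spotting the correct eigenvector. Once one notices that $\bm 1_n$ lies in the kernel of $\matM$ (hence of $\matM\transp\matM$) and that the perturbation $\matM\transp\matM - \matI_n$ is a symmetric operator of rank at most two supported on the subspace spanned by $\p$ and $\bm 1_n$, the $2\times 2$ restriction diagonalizes cleanly and exposes $n\|\p\|_2^2$ as the remaining nonzero eigenvalue. The pseudoinverse unwinding in the second paragraph and the Cauchy--Schwarz step are routine by comparison.
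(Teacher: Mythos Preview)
Your proof is correct and, in fact, yields the slightly sharper constant $\omega/r$ in place of $\omega/(r+1)$. The paper takes a different route: rather than building an explicit instance from scratch, it invokes a column-subset-selection lower bound from~\citep{BDM11a}, which supplies a matrix $\matB\in\R^{(\omega-1)\times\omega}$ for which no $r$ sampled (and rescaled) rows can yield a spectral-norm rank-$d$ approximation better than $\frac{\omega}{r+1}\TNormS{\matB-\matB_d}$; it then sets $\matA=\matU_{\matB,d}$ so that the coreset-regression residual dominates that approximation error. Your construction is more elementary and entirely self-contained---no external black box is needed---and the rank-two perturbation/eigenvector argument for $\matM\transp\matM$ is clean. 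The price is specialization to $d=1$; the paper's reduction works for any $d$, although since the theorem is existential in $\matA$ (hence in $d$), your $d=1$ instance already suffices. One small remark: your eigenvector step uses $r<n$ to guarantee $n\p-\bm1_n\neq\bm0$; for $r\ge n$ the claimed ratio $\omega/(r+1)\le 1$ is trivially met because $\matX_{opt}$ is also optimal for the spectral norm.
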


\begin{proof}

First, we need some results from \cite{BDM11a}. Consider the matrix
$$\matH = [\e_1+\alpha\e_2, \e_1+\alpha\e_3,\ldots, \e_1+\alpha\e_{\omega}] \in\R^{\omega\times (\omega-1)},$$
where \math{\e_i\in\R^{\omega}} are the standard basis vectors. Then, let $\matB = \matH\transp \in \R^{(\omega-1) \times \omega}$.
Theorem 34 in \cite{BDM11a} (with $\alpha=1$) argues the following: given $\matB$ and any sampling matrix \math{\matS \in \R^{r \times (\omega-1)}} and diagonal rescaling matrix
$\matD \in \R^{r \times r}$,
with \math{\hat\matC=\matD\matS\matB} (rescaled sampled coreset of \math{\matB}), and any $k$ with $1\le k \le \omega-1$,
$$ \TNormS{\matB - \Pi_{\hat\matC,k}(\matB)} \geq  \frac{\omega}{r+1}
\TNormS{\matB - \matB_k}.$$
In the above, \math{\Pi_{\hat\matC,k}(\matB) \in \R^{(\omega-1) \times \omega}} of rank $k$
is the best rank-\math{k}
approximation to \math{\matB} (in the spectral norm) whose rows lie in the span of all the rows
in \math{\hat\matC} (the row-space of \math{\hat\matC}); and,
\math{\matB_k  \in \R^{(\omega-1) \times \omega}} of rank $k$ is the best rank-\math{k} approximation to
\math{\matB} (which could be computed via the truncated SVD of
\math{\matB}).\footnote{Actually, \math{\matD} is irrelevant here because the
row-space of \math{\matS\matB} is the same as 
the row space of \math{\matD \matS \matB}.}

Since $\Pi_{\hat\matC,k}(\matB)$  is the best rank-\math{k}
approximation to \math{\matB} in the row-space of \math{\hat\matC},
it follows that $$\TNormS{\matB - \Pi_{\hat\matC,k}(\matB)}
\le \TNormS{\matB - \matX\hat\matC},$$ for  any
\math{\math\matX \in
\R^{(\omega-1)\times r}} with rank at most \math{k}
(because \math{\matX\hat\matC} will have rank at most \math{k} and is in the
row space of \math{\hat\matC}).
Set \math{\matX= \matU_{\matB,k} (\matD\matS \matU_{\matB,k})^{\dagger}}, where
\math{\matU_{\matB,k}\in\R^{(\omega-1)\times k}}
has \math{k} columns which are the top-\math{k} left singular vectors of
\math{\matB}.
It is easy to verify that
\math{\matX} has the correct dimensions and rank at most \math{k}.
Since \math{\hat\matC=\matD\matS\matB}, we have that
$$ \TNormS{\matB - \Pi_{\hat\matC,k}(\matB)}
\leq \TNormS{\matB-\matU_{\matB,k} (\matD\matS \matU_{\matB,k})^{\dagger} \matD
\matS \matB}.$$

We now  construct the regression problem which exhibits the lower bound in the theorem.
Let \math{\matA=\matU_{\matB,d}\in\R^{(\omega-1)\times d}} (i.e., we choose \math{k=d} in the above discussion) and
\math{n=\omega-1}. $\matB$ is as we described above.
Suppose a coreset construction algorithm gives sampling and rescaling matrices
\math{\matS} and \math{\matD}, for a coreset of size \math{r}.
So, the coreset regression is with
\math{\tilde\matA=\matD\matS\matA=\matC} and \math{\tilde\matB=\matD\matS\matB}.
The solution to the coreset regression is
\mand{\tilde\matX_{opt} = \matC^{\dagger}\matD\matS\matB=
\mat(\matD\matS\matA)^{\dagger}\matD\matS\matB
=
\mat(\matD\matS\matU_{\matB,d})^{\dagger}\matD\matS\matB,
}
which means that
\mand{
\norm{\matA\tilde\matX_{opt}-\matB}_2^2
=
\TNormS{
\matU_{\matB,d}\mat(\matD\matS\matU_{\matB,d})^{\dagger}\matD\matS\matB-\matB
}
\ge
\TNormS{\Pi_{\matC,d}(\matB)-\matB }
\ge
\frac{\omega}{r+1}
\TNormS{\matB_d - \matB}.
}
To conclude the proof, observe that
\math{\matB_d=\matU_{\matB,d}\matU_{\matB,d}\transp
\matB=\matA\matA^{\dagger}\matB=\matA\matX_{opt}}.
\end{proof}

\begin{theorem} [Frobenius Norm]\label{lem:lowerF}
There exists $\matA\in\R^{n \times d}$ of rank $d$ and $\matB\in\R^{n \times \omega}$
such that for any
$r >d$ and
any sampling and rescaling matrices
 $\matS \in \R^{n \times r}$ and  $\matD \in \R^{r \times r}$,
the solution to the coreset regression
$\tilde{\matX}_{opt}= (\matD \matS \matA)^{\dagger} \matD \matS \matB \in \R^{d \times \omega}$ satisfies (for any $\alpha > 0$)
$$ \FNormS{\matA \tilde{\matX}_{opt} - \matB}
\geq  \frac{n-r}{n-d}\left(1+\frac{d}{r+\alpha^2}\right) \FNormS{\matA \matX_{opt} - \matB}.$$
As $\alpha \rightarrow 0$ and \math{n\rightarrow\infty} the  lower bound is $1+d/r$.
\end{theorem}

\begin{proof}
First, we need some results from \cite{BDM11a}. For any integer $\gamma > 1$ and any integer $k \ge 1$, Theorem 36 in
\cite{BDM11a} exhibits a matrix
\math{\matB \in \R^{\gamma k \times(\gamma+1)k}} such that for any
sampling matrix
\math{\matS \in \R^{r \times \gamma k}} and diagonal rescaling matrix
 $\matD \in \R^{r \times r}$,
with \math{\hat\matC=\matD\matS\matB} (rescaled sampled coreset of \math{\matB}),
any \math{\alpha>0}, and any $r \ge 1$,
$$\frac{\FNormS{\matB - \Pi_{\hat\matC,k}(\matB)}}{\FNorm{\matB-\matB_k}^2}
\ge\frac{\gamma k-r}{\gamma k-k}\left(1+\frac{k}{r+\alpha^2}\right).
$$
The matrix $\matB$ is constructed as follows. Recall that $\gamma$ is any positive integer with $\gamma > 1$.
Let $\matA$ have dimensions $(\gamma+1) \times \gamma$ and be constructed as follows.
$$\matA = 
\left[\e_1+ \frac{\alpha}{\sqrt{k}}\e_2, \e_1+ \frac{\alpha}{\sqrt{k}}\e_3,\ldots, \e_1+ \frac{\alpha}{\sqrt{k}}\e_{\gamma}\right],$$
where \math{\e_i\in\R^{\gamma+1}} are the standard basis vectors.
Now construct $\matH$ to be block diagonal, with $k$ copies of $\matA$ along its diagonal;
so, the dimensions of \math{\matH} are  $(\gamma+1)k \times \gamma k$. Then, $\matB = \matH\transp$.

In the above, \math{\Pi_{\hat\matC,k}(\matB)\in \R^{\gamma k \times(\gamma+1)k}} of rank $k$
is the best rank-\math{k}
approximation to \math{\matB} (in the Frobenius norm) whose rows lie in the span of all the rows
in \math{\hat\matC} (the row-space of \math{\hat\matC}); and,
\math{\matB_k  \in\R^{\gamma k \times(\gamma+1)k}} of rank $k$ is the best rank-\math{k} approximation to
\math{\matB} (which could be computed via the truncated SVD of
\math{\matB}).
Since $\Pi_{\hat\matC,k}(\matB)$  is the best rank-\math{k}
approximation to \math{\matB} in the row-space of \math{\hat\matC},
it follows that $$\FNormS{\matB - \Pi_{\hat\matC,k}(\matB)}
\le \FNormS{\matB - \matX\hat\matC},$$ for  any
\math{\math\matX \in
\R^{\gamma k \times r}} with rank at most \math{k}
(because \math{\matX\hat\matC} will have rank at most \math{k} and is in the
row space of \math{\hat\matC}).
Set \math{\matX= \matU_{\matB,k} (\matD\matS \matU_{\matB,k})^{\dagger}}, where
\math{\matU_{\matB,k}\in\R^{\gamma k  \times k}}
has \math{k} columns which are the top-\math{k} left singular vectors of
\math{\matB}.
It is easy to verify that
\math{\matX} has the correct dimensions and rank at most \math{k}.
Since \math{\hat\matC=\matD\matS\matB}, we have that
$$ \FNormS{\matB - \Pi_{\hat\matC,k}(\matB)}
\leq \FNormS{\matB-\matU_{\matB,k} (\matD\matS \matU_{\matB,k})^{\dagger} \matD
\matS \matB}.$$

We now  construct the regression problem which proves the lower bound in the theorem.
Let \math{\matA=\matU_{\matB,d}\in\R^{\gamma d \times d}} (i.e., we choose \math{k=d} in the above discussion),
\math{n=\gamma d} (i.e. $n$ is a multiple of $d$ in the regression problem), and $\omega = (\gamma+1)d$. $\matB$ is as we described above.
Suppose a coreset construction algorithm gives sampling and rescaling matrices
\math{\matS} and \math{\matD}, for a coreset of size \math{r > d}.
So, the coreset regression is with
\math{\matC=\matD\matS\matA \in \R^{r \times d}} and \math{\matD\matS\matB \in \R^{r \times \omega}}.
The solution to the coreset regression is
\mand{\tilde\matX_{opt} = \matC^{\dagger}\matD\matS\matB=
\mat(\matD\matS\matA)^{\dagger}\matD\matS\matB
=
\mat(\matD\matS\matU_{\matB,d})^{\dagger}\matD\matS\matB,
}
which means that
\mand{
\norm{\matA\tilde\matX_{opt}-\matB}_{\mathrm{F}}^2
=
\FNormS{
\matU_{\matB,d}\mat(\matD\matS\matU_{\matB,d})^{\dagger}\matD\matS\matB-\matB
}
\ge
\FNormS{\Pi_{\matC,d}(\matB)-\matB }
\ge
\frac{\omega-d-r}{\omega-2d}\left(1+\frac{d}{r+\alpha^2}\right)
\FNormS{\matB_d - \matB}.
}
To conclude the proof, observe that
\math{\matB_d=\matU_{\matB,d}\matU_{\matB,d}\transp
\matB=\matA\matA^{\dagger}\matB=\matA\matX_{opt}} and $\omega = n+d$.
\end{proof}

\section{Open problems}

An important open problem arises in our work: can we determine the minimum size of a coreset that provides a \math{(1+\epsilon)} relative-error guarantee for simple linear regression? We conjecture that $\Omega\left(k/ \epsilon\right)$ is a lower bound, which will make our results almost tight. Certainly, coresets of size exactly \math{k} cannot be guaranteed: consider two data points \math{(1,1),(-1,1)}. The optimal regression is zero; however any coreset of size one will give non-zero regression.

\paragraph{Acknowledgements.} Christos Boutsidis acknowledges the support from XDATA program of the Defense Advanced Research Projects Agency (DARPA), administered through Air Force Research Laboratory contract FA8750-12-C-0323. Petros Drineas and Malik Magdon-Ismail have been supported by NSF CCF 1016501, NSF DMS 1008983, and NSF CCF CAREER 824684.

\bibliographystyle{plain}
\bibliography{RPI_BIB}

\clearpage

\appendix

\section{Linear Algebra Background}\label{sec:secBack}

The Singular Value Decomposition (SVD) of a matrix $\matA \in \R^{n \times d}$ of rank $k$ is
a decomposition $$\matA = \matU_\matA \matSig_\matA \matV_\matA\transp.$$ The singular values \math{\sigma_1\ge\sigma_2\ge\cdots\ge\sigma_k>0} are contained in the diagonal matrix \math{\matSig_\matA \in \R^{k \times k}}; $\matU_\matA \in\R^{n \times k}$ contains the left singular vectors of $\matA$; and $\matV_{\matA} \in \R^{d \times k}$ contains the right singular vectors. 
The Moore-Penrose pseudo-inverse of $\matA$ is
$\matA^{\dagger} = \matV_{\matA} \matSig_\matA^{-1} \matU_{\matA}\transp.$
Given an orthonormal matrix $\matU_\matA \in\R^{n \times k}$, the perpendicular matrix $\matU_\matA^\perp\in\R^{n \times (n-k)}$ to $\matU_{\matA}$ satisfies: 
$(\matU_\matA^\perp)\transp \matU_\matA^\perp = \matI_{n-k}$, $\matU_\matA\transp \matU_\matA^\perp = \bm{0}_{k \times (n-k)}$, and $\matU_\matA \matU_{\matA}\transp + \matU_\matA^\perp (\matU_\matA^\perp)\transp= \matI_{n}$. All the singular values of both $\matU_\matA$ and $\matU_\matA^\perp$ are equal to one. Given $\matU_{\matA}$, $\matU_\matA^\perp$ can be computed in deterministic $O\left(n \left(n-k\right)^2\right)$ time via the QR factorization.

We remind the reader of the Frobenius and spectral matrix norms: $ \FNormS{\matA} = \sum_{i,j}\matA_{ij}^2 =\sum_{i=1}^{k}\sigma_i^2 $ and $\TNorm{\matA}^2 = \sigma_1^2$. We will sometimes use the notation $\XNorm{\matA}$ to indicate that an expression holds for both $\xi = 2$ or $\xi = \mathrm{F}$.
For any two matrices $\matX$ and $\matY$,
$
\TNorm{\matX}\le\FNorm{\matX}\le\sqrt{\rank(\matX)}\TNorm{\matX}; \qquad \FNorm{\matX\matY} \leq \FNorm{\matX}\TNorm{\matY}; \qquad \FNorm{\matX\matY} \leq \TNorm{\matX}\FNorm{\matY}.
$
These are stronger variants of the standard submultiplicativity property $\XNorm{\matX\matY} \leq \XNorm{\matX} \XNorm{\matY}$ and we will refer to them as spectral submultiplicativity. It follows that, if $\matQ$ is orthonormal, then $\XNorm{\matQ \matX} \leq \XNorm{\matX}$ and $\XNorm{\matY \matQ\transp } \leq \XNorm{\matY}$. Finally, we will make frequent use of the following two lemmas.

\begin{lemma}[matrix-Pythagoras]\label{lem:pythagoras}
Let \math{\matX} and \math{\matY} be two $n \times d$ matrices. If \math{\matX\matY\transp=\bm{0}_{n \times n}} or \math{\matX\transp\matY=\bm{0}_{d \times d}}, then
$$\XNorm{\matX+\matY}^2 \leq \XNorm{\matX}^2+\XNorm{\matY}^2.$$
\end{lemma}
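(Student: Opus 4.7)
The plan is to handle the Frobenius and spectral cases separately, exploiting the fact that both hypotheses ($\matX\matY\transp=\bm 0$ and $\matX\transp\matY=\bm 0$) produce the same simplification by a transpose-symmetric argument.

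For the Frobenius norm, I would expand
\mand{
\FNormS{\matX+\matY}=\trace\bigl((\matX+\matY)(\matX+\matY)\transp\bigr)=\FNormS{\matX}+\FNormS{\matY}+2\,\trace(\matX\matY\transp).
}
If $\matX\matY\transp=\bm 0$, the cross term vanishes directly. If instead $\matX\transp\matY=\bm 0$, the cyclic property of trace gives $\trace(\matX\matY\transp)=\trace(\matY\transp\matX)=0$. In either case, equality holds in the Frobenius version, which is stronger than the claimed inequality.

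For the spectral norm, I would use $\TNormS{\matM}=\lambda_{\max}(\matM\matM\transp)=\lambda_{\max}(\matM\transp\matM)$. Suppose first that $\matX\matY\transp=\bm 0$; then also $\matY\matX\transp=(\matX\matY\transp)\transp=\bm 0$, so
\mand{
(\matX+\matY)(\matX+\matY)\transp=\matX\matX\transp+\matY\matY\transp.
}
Both summands are positive semidefinite, so by Weyl's inequality on the maximum eigenvalue of a sum of PSD matrices,
\mand{
\TNormS{\matX+\matY}=\lambda_{\max}\bigl(\matX\matX\transp+\matY\matY\transp\bigr)\le\lambda_{\max}(\matX\matX\transp)+\lambda_{\max}(\matY\matY\transp)=\TNormS{\matX}+\TNormS{\matY}.
}
If instead $\matX\transp\matY=\bm 0$, I would run the analogous argument on $(\matX+\matY)\transp(\matX+\matY)=\matX\transp\matX+\matY\transp\matY$, again applying Weyl's inequality.

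There is no real obstacle here: the argument is routine linear algebra. The only point requiring a little care is the symmetry between the two hypotheses, which is handled by the trace-cyclicity identity in the Frobenius case and by working with $\matM\matM\transp$ versus $\matM\transp\matM$ in the spectral case.
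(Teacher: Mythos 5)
Your proof is correct. The paper states this lemma in the appendix without giving a proof, and your argument---the trace expansion with cyclicity for the Frobenius case, and the PSD decomposition of $(\matX+\matY)(\matX+\matY)\transp$ (or its transpose counterpart) combined with subadditivity of the largest eigenvalue for the spectral case---is the standard one the authors evidently had in mind; both norms and both hypotheses are handled correctly.
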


\begin{lemma}[Fact 6.4.12 in~\cite{Bernstein05}]
\label{lem:pseudo}
Let $\matA \in \R^{m \times n}, \matB \in \R^{n \times \ell}$, and assume that $\rank(\matA) = \rank(\matB) = n$. Then, $$(\matA\matB)^{\dagger}=\matB^{\dagger}\matA^{\dagger}.$$
\end{lemma}

\section{Algorithms and Proofs of Lemmas~\ref{lem:2setS} and~\ref{lem:2setF}}
We now provide all the details of the proofs and the corresponding algorithms of 
Lemmas~\ref{lem:2setS} and~\ref{lem:2setF}. Those results, which have been described in detail in~\cite{BoutsidisPhD}, are slight extensions
of two algorithms presented in~\cite{BDM11a}, which themselves extend the original spectral sparsification result of Batson, Spielman, and Srivastava~\cite{BSS09}. More specifically, Lemma~\ref{theorem:3setGeneralS} below - in some sense - generalizes Lemma~\ref{lem:1set}; indeed,
setting $\matV = \matQ\transp := \matU$ in  Lemma~\ref{theorem:3setGeneralS}  gives Lemma~\ref{lem:1set}. Lemma~\ref{theorem:3setGeneralF} below also describes a deterministic algorithm for sampling columns from two matrices but the goal here is to optimize different spectral properties in the sampled matrices. 

In this section of the Appendix, we will slightly abuse notation by denoting with $\hat\matS \in \R^{n \times r}$ a sampling matrix which samples columns - not rows - from matrices.
We will later use $\matS = \hat\matS\transp$ to be consistent with the notation used throughout the paper. 
\begin{algorithm}[t] \label{alg:sample1}
\begin{framed}
   \caption{{\sf DeterministicSamplingI} (Lemma~\ref{theorem:3setGeneralF})}
   {\bfseries Input:} $\matV\transp=[\v_1, \v_2, \ldots,\v_n] \in \R^{k \times n}$,
\math{\matB=[\b_1, \b_2, \ldots,\b_n] \in \R^{\ell_1 \times n}},
and \math{r > k}. \\
    {\bfseries Output:} Sampling matrix $\hat\matS \in \R^{n \times r}$ and rescaling matrix $\matD \in \R^{r \times r}$.
 \begin{algorithmic}[1]
    \STATE Initialize  $\matA_0 = \bm{0}_{k \times k}$, $\hat\matS = \bm{0}_{n \times r}$, and $\matD=\bm{0}_{r \times r}$.
    \STATE Set constants $
\delta_\matB=\FNormS{\matB} (1-\sqrt{k/r})^{-1};\
\delta_L=1$.
   \FOR{$\tau=0$ {\bfseries to} $r-1$}
   \STATE Let \math{\scl_\tau=\tau-\sqrt{rk}}.
   \STATE Pick index $i_{\tau}\in\{1,2,...,n\}$ and number $t_{\tau}>0$ (see text for the definition of $U, L$):
    $$\hspace*{-0.2in} U(\b_{i_{\tau}},\delta_\matB) \le\frac{1}{t_{\tau}}\le
       L(\v_{i_{\tau}}, \delta_L, \matA_{\tau},\scl_\tau).$$
   \STATE Update $\matA_{\tau+1} = \matA_{\tau} + t_{\tau} \v_{i_{\tau}} \v_{i_{\tau}}\transp$; set $\hat\matS_{{i_{\tau}},\tau+1} = 1$ and $\matD_{\tau+1,\tau+1} = 1/\sqrt{t_{\tau}}$.
   \ENDFOR
   \STATE Multiply all the weights in $\matD$ by
\mand{\sqrt{ r^{-1}(1-\sqrt{k/r}) }.}
   \STATE {\bfseries Return:} $\hat\matS$ and $\matD$.
\end{algorithmic}
\end{framed}
\end{algorithm}
\begin{algorithm}[t] \label{alg:sample2}
\begin{framed}
   \caption{{\sf DeterministicSamplingII} (Lemma~\ref{theorem:3setGeneralS})}
   {\bfseries Input:} $\matV\transp=[\v_1, \v_2, \ldots,\v_n] \in \R^{k \times n}$,  \math{\matQ=[\q_1, \q_2,\ldots,\q_d] \in \R^{\ell_2 \times n}},
and \math{r > k}. \\
    {\bfseries Output:} Sampling matrix $\hat\matS \in \R^{n \times r}$ and rescaling matrix $\matD \in \R^{r \times r}$.
 \begin{algorithmic}[1]
    \STATE Initialize  $\matA_0 = \bm{0}_{k \times k}$, $\matB_0 = \bm{0}_{\ell_2 \times \ell_2}$, $\matOmega = \bm{0}_{n \times r}$, and $\hat\matS=\bm{0}_{r \times r}$.
    \STATE Set constants $
\delta_\matQ=\left( 1 + \ell_2/r \right) \left(1-\sqrt{k/r}\right)^{-1};\ \delta_L=1$.
   \FOR{$\tau=0$ {\bfseries to} $r-1$}
   \STATE Let \math{\scl_\tau=\tau-\sqrt{rk}}; 
   \math{\scu_{\tau} = \delta_{\matQ} \left( \tau + \sqrt{\ell_2 r} \right) }
   \STATE Pick index $i_{\tau}\in\{1,2,...,n\}$ and number $t_{\tau}>0$ (see text for the definition of $U, L$):
    $$\hspace*{-0.2in} \hat{U}(\q_{i_{\tau}},\delta_\matQ,\matB_{\tau},\scu_\tau) \le\frac{1}{t_{\tau}}\le
       L(\v_{i_{\tau}}, \delta_L, \matA_{\tau},\scl_\tau).$$
   \STATE Update $\matA_{\tau+1} = \matA_{\tau} + t_{\tau} \v_{i_{\tau}} \v_{i_{\tau}}\transp$;
                 $\matB_{\tau+1} = \matB_{\tau} + t_{\tau} \q_{i_{\tau}} \q_{i_{\tau}}\transp$, and \\
   set $\hat\matS_{{i_{\tau}},\tau+1} = 1$, $\matD_{\tau+1,\tau+1} = 1/\sqrt{t_{\tau}}$.
   \ENDFOR
   \STATE Multiply all the weights in $\matD$ by $ \sqrt{ r^{-1} \left(1-\sqrt{k/r}\right) }.$
   \STATE {\bfseries Return:} $\hat\matS$ and $\matD$.
\end{algorithmic}
\end{framed}
\end{algorithm}
\begin{lemma} [Lemma 13 in~\cite{BDM11a}]
\label{theorem:3setGeneralF}
Let \math{\matV\transp \in \R^{k \times n}} and
\math{\matB \in \R^{\ell_1 \times n}} with \math{\matV\transp\matV=\matI_k}. Let $r > k$.
Algorithm~\ref{alg:sample1} runs in $O(rk^2n + \ell_1n )$ time and deterministically constructs a sampling matrix \math{\hat\matS\in\R^{n \times r}} and
a rescaling matrix \math{\matD\in \R^{r \times r}} such that,
\begin{align*}
&\sigma_k(\matV\transp \hat\matS \matD) \ge 1 - \sqrt{{k}/{r}};
&\FNorm{\matB \hat\matS \matD}         \le   \FNorm{\matB}.
\end{align*}
We write $[\matD, \hat\matS] = DeterministicSamplingI(\matV\transp, \matB, r)$ to denote this procedure.
\end{lemma}

Algorithm \ref{alg:sample1} is a greedy technique that selects columns one at a time.
To describe the algorithm in more detail, it is convenient to view the input matrices as two sets of \math{n} vectors,
$$\matV\transp=[\v_1, \v_2, \ldots,\v_n],$$ and
$$\matB=[\b_1, \b_2, \ldots,\b_n].$$
Given  $k$ and $r>k$, introduce the
iterator $\tau = 0, 1,2,...,r-1,$ and define the
parameter $$\scl_\tau=\tau-\sqrt{rk}.$$
For a square symmetric
matrix \math{\matA\in\R^{k\times k}} with eigenvalues \math{\lambda_1,\ldots,
\lambda_k}, $\v \in \R^k$ and $\scl\in\R$, define
$$
\phi(\scl, \matA) =
\sum_{i=1}^k\frac{1}{\lambda_i-\scl},
$$
and let
$L(\v, \delta_L, \matA, \scl)$  be defined as
$$ L(\v, \delta_L, \matA, \scl)  =
\frac{\v\transp(\matA-\scl'\matI_k)^{-2}\v}
{\phi(\scl', \matA)-\phi(\scl,\matA)} -\v\transp(\matA-\scl'\matI_k)^{-1}\v,$$
where $\scl'= \scl+\delta_L = \scl+1.$
For a vector $\z$ and scalar \math{\delta > 0}, define the function
$$
U(\z,\delta) = \frac{1}{\delta} \z\transp \z .
$$
At each iteration \math{\tau}, the algorithm selects
$i_{\tau}$, $t_{\tau} > 0$
for which
$$U(\b_{i_{\tau}},\delta_\matB)\le t_{\tau}^{-1} \le
       L(\v_{i_{\tau}},\delta_L,\matA_{\tau},\scl_\tau).$$
The running time of the algorithm
is dominated by the search for an index $i_{\tau}$
satisfying $$U(\b_{i_{\tau}},\delta_\matB)\le t_{\tau}^{-1}\le
       L(\v_{i_{\tau}},\delta^{-1},\matA_{\tau},\scl_\tau)$$ (one can achieve that by exhaustive search). One needs
\math{\phi(\scl,\matA)}, and hence the eigenvalues of
\math{\matA}. This takes
\math{O(k^3)} time, once per iteration, for a total of
\math{O(rk^3)}. Then, for \math{i=1,\ldots,n}, we need to compute
\math{L} for every \math{\v_i}. This takes \math{O(nk^2)} per iteration,
for a total of \math{O(rnk^2)}. To compute \math{U}, we need
\math{\b_i\transp\b_i}  for \math{i=1,\ldots,n},
which need to be computed only once for the whole algorithm and
takes \math{O(\ell_1 n )}.
So, the total running time is \math{O(nrk^2+\ell_1 n)}.
\begin{lemma} [Lemma 12 in~\cite{BDM11a}]
\label{theorem:3setGeneralS}
Let \math{\matV\transp \in \R^{k \times n}}, \math{\matQ \in
\R^{\ell_2 \times n} }, \math{\matV\transp\matV=\matI_k}, and
\math{\matQ\transp\matQ=\matI_{\ell_2}}. Let $r > k$.
Algorithm~\ref{alg:sample2} runs in  $O( r k^2 n + r \ell_2^2n )$ time and deterministically constructs a sampling matrix
\math{\hat\matS\in\R^{n \times r}} and
a rescaling matrix \math{\matD\in \R^{r \times r}} such that,
\begin{align*}
&\sigma_k(\matV\transp \hat\matS \matD) \ge 1 - \sqrt{{k}/{r}};
&\TNorm{\matQ \hat\matS \matD}          \le  1+\sqrt{\ell_2/r}.
\end{align*}
If $\matQ = \matI_{n}$, it runs in $O( r k^2 n )$;
we write $[\matD, \hat\matS] = DeterministicSamplingII(\matV\transp, \matQ, r)$ for this procedure.
\end{lemma}
Algorithm \ref{alg:sample2} is similar to Algorithm \ref{alg:sample1}; we only need to define the function
$\hat{U}$.
For a square symmetric matrix \math{\matB\in\R^{\ell_2 \times \ell_2}} with eigenvalues \math{\lambda_1,\ldots,
\lambda_{\ell_2}}, $\q \in \R^{\ell_2}$, $\scu\in\R$, define:
$
\hat{\phi}(\scu,\matB) =
\sum_{i=1}^{\ell_2}\frac{1}{\scu - \lambda_i},
$
and let
$\hat{U}(\q, \delta_{\matQ}, \matB, \scu)$  be defined as
$ \hat{U}(\q, \delta_{\matQ}, \matB, \scu)  =
\frac{\q\transp(\matB-\scu'\matI_{\ell_2})^{-2}\q}
{\hat\phi(\scu,\matB)-\hat{\phi}(\scu', \matB) }
-
\q\transp(\matB-\scu'\matI_{\ell_2})^{-1}\q,$
where
$ \scu' = \scu + \delta_{\matQ} =
\scu + \left( 1 + \ell_2/r \right) \left(1-\sqrt{k/r}\right)^{-1}
.$
The running time of the algorithm is \math{O(nrk^2+ nr\ell_2^2)}.

\subsection{Proof of Lemma~\ref{lem:2setS}}
We first restate the lemma. 
\begin{lemma} [Restatement of Lemma~\ref{lem:2setS}] 
Let \math{\matY\in\R^{n\times \ell_1}} and \math{\matPsi\in\R^{n\times \ell_2}} with respective ranks \math{\rho_{\matY}}, and \math{\rho_\matPsi}. Given \math{r>\rho_\matY}, there exists a deterministic algorithm that runs in time $T_{SVD}\left(\matY\right) + T_{SVD}\left(\matPsi\right) + O(r n (\rho_{\matY}^2+\rho_{\matPsi}^2))$ and constructs sampling and rescaling matrices \math{\matS\in\R^{r\times n}},
\math{\matD\in\R^{r\times r}} satisfying:
\mand{
\rank\left(\matD\matS \matY\right) = \rank\left(\matY\right);
\
\TNorm{\left(\matD\matS\matY\right)^{\dagger}} < \frac{1}{1 - \sqrt{{\rho_{\matY}}/{r} }}
 \TNorm{\matY^{\dagger}};
\
\TNorm{\matD\matS\matPsi}
<\left( 1 + \sqrt{ \frac{\rho_{\matPsi}}{r} }\right) \TNorm{\matPsi}.
}
If $\matPsi = \matI_{n}$, the running time of the algorithm reduces to $T_{SVD}\left(\matY\right)+O\left(r n \rho_{\matY}^2\right)$. We write $\left[\matD, \matS\right] = MultipleSpectralSampling\left(\matY,\matPsi, r\right)$ to denote such a deterministic procedure.
\end{lemma}

\begin{proof}
Let the SVD of $\matY \in \R^{n \times \ell_1}$ is $\matY = \matU_{\matY} \matSig_{\matY} \matV_{\matY}\transp$,
with $\matU_{\matY} \in \R^{n \times \rho_\matX}$, $\matSig_{\matY} \in \R^{\rho_\matY \times \rho_\matY}$, $\matV_{\matY} \in \R^{\ell_1 \times \rho_\matX}$. 
Let the SVD of $\matPsi \in \R^{n \times \ell_2}$ is $\matPsi = \matU_{\matPsi} \matSig_{\matPsi} \matV_{\matPsi}\transp$,
with $\matU_{\matPsi} \in \R^{n \times \rho_\matPsi}$, $\matSig_{\matPsi} \in \R^{\rho_\matPsi \times \rho_\matPsi}$,
and $\matV_{\matPsi} \in \R^{\ell_2 \times \rho_\matPsi }$. 
Let 
$$[\matD, \hat\matS] = DeterministicSamplingII(\matU_\matY\transp, \matU_\matPsi\transp,r).$$
By Lemma~\ref{theorem:3setGeneralS},
$$
\sigma_{\min}(\matU_\matY\transp\hat\matS\matD )\ge \left(1-\sqrt{\rho_{\matY}/{r}} \right),
$$ which implies 
$$\TNorm{(\matU_\matY\transp \hat\matS \matD )^{\dagger}}\le
\left(1-\sqrt{\rho_{\matY}/{r}} \right)^{-1},$$ 
and
$$\rank(\matU_\matY\transp\hat\matS\matD)=\rho_\matY.$$ 
Also,
$$\TNorm{\matU_\matPsi\transp \hat\matS \matD } \le \left(1+\sqrt{\rho_{\matPsi}/{r}} \right)$$
because
$$
\sigma_{\max}
(\matU_\matPsi\transp\hat\matS\matD )\le \left(1+\sqrt{\rho_{\matPsi}/{r}} \right).
$$ 
Thus,
\eqan{
\TNorm{(\matY\transp \hat\matS\matD )^{\dagger}}
&=&
\TNorm{(\matV_{\matY} \matSig_{\matY} \matU_{\matY}\transp \hat\matS \matD)^{\dagger} } \\
&{\buildrel (a)\over =}&
\TNorm{(\matU_{\matY}\transp \hat\matS \matD)^{\dagger} (\matV_{\matY} \matSig_{\matY})^{\dagger}}\\
&\le&
\TNorm{(\matU_{\matY}\transp \hat\matS \matD)^{\dagger}}
\TNorm{ (\matV_{\matY} \matSig_{\matY})^{\dagger}}\\
&\le&
\left(1-\sqrt{\rho_{\matY}/{r}} \right)^{-1}
\TNorm{ (\matV_{\matY} \matSig_{\matY})^{\dagger}}\\
&=& \left(1-\sqrt{\rho_{\matY}/{r}} \right)^{-1}
\TNorm{(\matY\transp)^{\dagger}}
}
(a) uses Lemma ~\ref{lem:pseudo}. To obtain the first inequality in the lemma we need to take $\matS = \hat\matS\transp$
and observe that  $\TNorm{(\matY\transp \hat\matS\matD )^{\dagger}} = \TNorm{( \matD \matS \matY )^{\dagger}}$, and
$\TNorm{(\matY\transp)^{\dagger}} = \TNorm{\matY^{\dagger}}$. 
We now prove the second inequality in the lemma,
$$
\TNorm{ \matPsi \transp \hat\matS \matD }  = 
 \TNorm{ \matV_\matPsi \matSig_\matPsi \matU_\matPsi\transp \hat\matS\matD } \leq 
 \TNorm{ \matV_\matPsi \matSig_\matPsi }  \TNorm{ \matU_\matPsi\transp \hat\matS\matD } = 
\TNorm{\matPsi\transp}   \TNorm{ \matU_\matPsi\transp \hat\matS\matD } \leq  \TNorm{\matPsi\transp}    \left(1+\sqrt{\rho_{\matPsi}/{r}} \right).
$$
To obtain the second inequality in the lemma we need to take $\matS = \hat\matS\transp$
and use  $  \TNorm{ \matPsi \transp \hat\matS \matD } = \TNorm{ \matD \matS \matPsi  },$ and
$\TNorm{\matPsi\transp} = \TNorm{\matPsi}$. 
\end{proof}

\subsection{Proof of Lemma~\ref{lem:2setF}}
We first restate the lemma. 
\begin{lemma} [Restatement of Lemma~\ref{lem:2setF}]
Let \math{\matY\in\R^{n\times \ell_1}} and \math{\matPsi\in\R^{n\times \ell_2}} with respective ranks \math{\rho_{\matY}}, and \math{\rho_\matPsi}. Given \math{r>\rho_\matY}, there exists a deterministic algorithm that runs in time $T_{SVD}(\matY)+O(r n \rho_{\matY}^2  + \ell_2 n)$ and constructs sampling and rescaling matrices \math{\matS\in\R^{r\times n}},
\math{\matD\in\R^{r\times r}} satisfying:
\mand{
\rank\left(\matD\matS \matY\right) = \rank\left(\matY\right);
\
\TNorm{\left(\matD\matS\matY\right)^{\dagger}} < \frac{1}{1 - \sqrt{{\rho_{\matY}}/{r}}}
 \TNorm{\matY^{\dagger}};  \quad
\
\FNorm{\matD\matS\matPsi} \le \FNorm{\matPsi}.
}
If $\matPsi = \matI_{n}$, the running time of the algorithm reduces to $T_{SVD}\left(\matY\right)+O\left(r n \rho_{\matY}^2\right)$. We write $\left[\matD, \matS\right] = MultipleFrobeniusSampling\left(\matY, \matPsi, r\right)$ to denote such a deterministic procedure.
\end{lemma}

\begin{proof}
Let the SVD of $\matY \in \R^{n \times \ell_1}$ is $\matY = \matU_{\matY} \matSig_{\matY} \matV_{\matY}\transp$,
with $\matU_{\matY} \in \R^{n \times \rho_\matX}$, $\matSig_{\matY} \in \R^{\rho_\matY \times \rho_\matY}$, $\matV_{\matY} \in \R^{\ell_1 \times \rho_\matX}$. 
Let the SVD of $\matPsi \in \R^{n \times \ell_2}$ is $\matPsi = \matU_{\matPsi} \matSig_{\matPsi} \matV_{\matPsi}\transp$,
with $\matU_{\matPsi} \in \R^{n \times \rho_\matPsi}$, $\matSig_{\matPsi} \in \R^{\rho_\matPsi \times \rho_\matPsi}$,
and $\matV_{\matPsi} \in \R^{\ell_2 \times \rho_\matPsi }$. 
Let 
$$[\matD,\hat\matS] = DeterministicSamplingI(\matU_\matY\transp,\matPsi\transp,r).$$
By Lemma~\ref{theorem:3setGeneralF},
$$
\sigma_{\min}(\matU_\matY\transp\hat\matS\matD )\ge \left(1-\sqrt{\rho_{\matY}/{r}} \right),
$$ which implies 
$$\TNorm{(\matU_\matY\transp \hat\matS \matD )^{\dagger}}\le
\left(1-\sqrt{\rho_{\matY}}/{r} \right)^{-1},$$ 
and
$$\rank(\matU_\matY\transp\hat\matS\matD)=\rho_\matY.$$ 
Also,
$$\FNorm{\matPsi\transp \hat\matS \matD } \le \FNorm{\matPsi\transp },$$
which by taking $\matS = \hat\matS\transp$ gives the second inequality in the lemma,
$$\FNorm{\matD \matS \matPsi } \le \FNorm{\matPsi}.$$ 
Now we prove the first inequality in the lemma,
\eqan{
\TNorm{(\matY\transp \hat\matS\matD )^{\dagger}}
=
\TNorm{(\matV_{\matY} \matSig_{\matY} \matU_{\matY}\transp \hat\matS \matD)^{\dagger} } 
{\buildrel (a)\over =}
\TNorm{(\matU_{\matY}\transp \hat\matS \matD)^{\dagger} (\matV_{\matY} \matSig_{\matY})^{\dagger}}
&\le&
\TNorm{(\matU_{\matY}\transp \hat\matS \matD)^{\dagger}} \TNorm{ (\matV_{\matY} \matSig_{\matY})^{\dagger}}\\
&\le&
(1-\sqrt{\rho_{\matY}}/{r})^{-1} \TNorm{ (\matV_{\matY} \matSig_{\matY})^{\dagger}}.
}
(a) uses
Lemma ~\ref{lem:pseudo}. To conclude, use $\TNorm{(\matY\transp \matS\matD )^{\dagger}}  = \TNorm{(\matD \matS \matY )^{\dagger}};$
\math{\TNorm{(\matV_\matY\matSig_\matY)^{\dagger}}=\TNorm{(\matY\transp)^{\dagger}}  = \TNorm{(\matY)^{\dagger}}}.

\end{proof}

\end{document}